\newtheorem{definition}{Definition}
\newtheorem{theorem}{Theorem}
\newtheorem{lemma}{Lemma}
\newtheorem{proposition}{Proposition}
\newenvironment{proof}{{\bf Proof:}}{\hfill\rule{2mm}{2mm}}
\newcommand{\Prob}[1]{\mathrm{P}\left\{#1\right\}}
\newcommand{\E}[1]{\mathrm{E}\left[#1\right]}
\begin{document}
\title{Urgency of Information for \\Context-Aware Timely Status Updates \\in Remote Control Systems}

\author{Xi~Zheng, Sheng~Zhou,~\IEEEmembership{Member,~IEEE,} Zhisheng~Niu,~\IEEEmembership{Fellow,~IEEE}
\thanks{Part of the paper has been presented in IEEE GLOBECOM'19 \cite{previous} and has been accepted by 6G Summit \cite{previous_2}. (corresponding author: Sheng Zhou)

The authors are with Beijing National Research Center for Information Science and Technology, Department of Electronic Engineering, Tsinghua University, Beijing 100084, China. Emails: zhengx14@mails.tsinghua.edu.cn,
\{sheng.zhou, niuzhs\}@tsinghua.edu.cn.}}

\maketitle

\begin{abstract}
As 5G and Internet-of-Things (IoT) are deeply integrated into vertical industries such as autonomous driving and industrial robotics, timely status update is crucial for remote monitoring and control. In this regard, Age of Information (AoI) has been proposed to measure the freshness of status updates. However, it is just a metric changing linearly with time and irrelevant of context-awareness.  We propose a context-based metric, named as Urgency of Information (UoI), to measure the nonlinear time-varying importance and the non-uniform context-dependence of the status information. This paper first establishes a theoretical framework for UoI characterization and then provides UoI-optimal status updating and user scheduling schemes in both single-terminal and multi-terminal cases. Specifically, an update-index-based scheme is proposed for a single-terminal system, where the terminal always updates and transmits when its update index is larger than a threshold. For the multi-terminal case, the UoI of the proposed scheduling scheme is proven to be upper-bounded and its decentralized implementation by Carrier Sensing Multiple Access with Collision Avoidance (CSMA/CA) is also provided. In the simulations, the proposed updating and scheduling schemes notably outperform the existing ones such as round robin and AoI-optimal schemes in terms of UoI, error-bound violation and control system stability.
\end{abstract}

\begin{IEEEkeywords}
Remote control systems, status update, transmission scheduling, Age-of-Information, 5G, Internet of Things.
\end{IEEEkeywords}

\section{Introduction}
Emerging applications in 5G and Internet of Things (IoT) require timely and reliable status updates to enable remote monitoring and control\cite{network}\cite{tactile}. Taking autonomous driving as an example, vehicles need to timely exchange position, velocity, acceleration and driving intention information (called ``status information'' in the sequel) to enable driving assistance applications such as collision avoidance, intersection crossing, and platoon driving.  Here, note that there are two major differences from the traditional networks.  One is that the status information has a unique feature, called Markovian feature, i.e., the old (or existing) status packets can be completely replaced by the newly received (or updated) status information or, alternatively speaking, the system performance relies only on the latest status information.  Another is that the timeliness requirement is different from the traditional communication latency requirement, where the timeliness is counted from the generation epoch of the status information and therefore
consists of the delay until being updated (or sampled), the delay until being scheduled, and the communication latency through the network. We will call the timeliness requirement as \emph{information latency} in the sequel, which can be considered as the extension of the communication latency.

To characterize the information latency, a new metric called \emph{Age of Information} (AoI) has been introduced \cite{aoi}, which is defined as the time elapsed since the generation epoch of the most up-to-date packet received.  This is indeed a very good metric to measure the freshness of the status information. Generally speaking, the smaller the AoI, the fresher the status information and therefore the less the information latency.  As a result, there have been a lot of efforts devoted to the researches on AoI \cite{aoi}--\!\!\cite{NegExpo}. Some work focuses on the analysis of average AoI by considering periodic or random sampling and transmission \cite{6195689}\cite{DBLP:journals/corr/abs-1805-12586}\cite{SamplePath}, energy harvesting terminals \cite{AoIEH}\cite{Arafa18}, multiple transmission paths \cite{7364263} and packet blockage \cite{7415972}, while others investigate sampling and transmission scheduling for multi-access networks \cite{8424039}\cite{DBLP:journals/corr/abs-1801-01803}--\!\!\cite{Jiang}, energy harvesting sources \cite{7308962}--\!\!\cite{Arafa18}, and systems with random delay \cite{8000687}\cite{7541763}. In particular, ref. \cite{NonlinearFunc}--\!\!\cite{8424039} characterize the non-linear loss caused by information staleness as a function (e.g., exponential, logarithmic, and step function) of AoI, which adds non-linearity to the analysis of status information. However, by investigating the mean square error (MSE) minimization in the remote estimation of Wiener process \cite{wiener} and Ornstein-Uhlenbeck process \cite{NegExpo}, it is proved that AoI-based sampling and transmission is not optimal when status information is observable. In \cite{EffectiveAge}, the authors claim the necessity for ``effective'' age metrics that are minimized when MSE is minimized, and consider MSE minimization for the remote estimation of two-state Markov signals. In addition, authors of \cite{AoS} propose Age of Synchronization (AoS) for applications like cache design to measure the length of time during which the cache is not synchronized. Similarly, Age of Incorrect Information (AoII), defined as the time elapsed since the last time that the status information is correct, is proposed as a timeliness metric in \cite{AoII}. However, none of these metrics takes context information into consideration.

In practical systems, the impact or the importance of the information latency on the system performance is inevitably nonlinear as well as context-dependent.  Generally speaking, when the status changes rapidly, more frequent information updates are required. Timeliness requirement is also related to context information \cite{context}\cite{context-survey}, which includes all the knowledge about the underlying system that determines how important the status is. For example, when the system is at a critical situation (e.g., approaching to an intersection or overtaking the front vehicles), its status should be more frequently updated. Otherwise, insufficient status updates will hinder the effectiveness of the applications, resulting in unacceptable performance degradation. On the other hand, excessive status information deliver will bring little marginal performance gain while consuming extra wireless and energy resources. Therefore, to ensure the timeliness of information delivery and the effectiveness of status-based control, status updates should adapt to the context of the system and the non-uniform evolution of status.

In this regard, the AoI metric is not perfect for the characterization of information latency (i.e., timeliness of the status information) because it is in nature linear with time and also irrelevant of context situation.  In this paper, 
we propose a new metric, named as \emph{Urgency of Information} (UoI), which is defined as the product of context-aware weight and the cost resulted from the inaccuracy of status estimation. Here, the time-dependent context-aware weight represents how crucial the status information is at the specific moment and the nonlinear cost represents the performance distortion due to the information latency.  
In other words, the UoI measures the performance degradation of a system due to the difference between the actual status and received status information. If the UoI is large, it means that the system is more \emph{urgent} for new status information.  As a special case, 
if the weight is time-invariant and context-unaware and the cost increases linearly over time, then UoI is equivalent to the conventional AoI.  Thus, UoI can be considered as an extension of AoI.  

The main contributions of this paper are summarized as follows:
\begin{enumerate}
\item Urgency of Information is proposed as a new metric for the timeliness of status updates, through which both context-based importance and non-uniform evolution of the status are evaluated.
\item The reduction of UoI subject to an average updating frequency constraint is investigated, and an adaptive updating scheme is proposed. It is theoretically proved that the adaptive scheme can give a bounded UoI. It is also shown by simulations that the proposed scheme can achieve a near-optimal UoI.
\item A transmission scheduling problem is formulated to reduce the average UoI of a multi-access network for status updates, and an adaptive scheduling policy is proposed. The decentralization of the scheduling policy is also proposed, which enables independent implementation at each terminal, and reduces potential transmission collision by fine-tuned Carrier Sensing Multiple Access with Collision Avoidance (CSMA/CA). Simulation results show that the scheduling policies have notable advantages compared to AoI-based policy in UoI reduction and control performance.
\end{enumerate}

The remainder of this paper is organized as follows. Section II introduces the concept of UoI with an example in remote tracking control. The design of adaptive updating scheme for a single terminal with average updating frequency constraint is investigated in Section III. A multi-terminal scheduling problem is formulated to reduce the UoI, and an adaptive scheduling policy along with its decentralization is proposed in Section IV. In Section V, the performance of various policies is illustrated with simulation results. Section VI concludes the paper.
                                                                                                                                                       
\section{Urgency of Information: A New Metric}
In this section, we first introduce the definition of UoI, then study the performance of a remote tracking control system in order to build the relationship between information latency and remote control, so as to understand the rationale behind UoI. 

\subsection{Definition of UoI}
Denote the actual status of a continuous signal at time $t$ by $x(t)$, the estimated status by $\hat{x}(t)$, and the estimation error at time $t$ by $Q(t) = x(t) - \hat{x}(t)$. The performance degradation caused by status estimation inaccuracy is denoted by $\delta(Q(t))$, where $\delta(\cdot)$ is a non-negative even function (e.g., norms, quadratic function). The time-varying importance of status is evaluated by context-aware weight $\omega(t)$. When the system is at a crucial situation, the corresponding context-aware weight $\omega(t)$ is large, and vice versa. 

To measure the timeliness of status updates in remote control systems, we propose a new metric called the \emph{Urgency of Information} (UoI) (previously the context-aware information lapse in \cite{previous}), which is defined as the product of weight $\omega(t)$ and cost $\delta(Q(t))$:
\begin{equation}
\label{eqn:def}
F(t) = \omega(t)\delta(Q(t)). 
\end{equation}
Due to the Markovian property of status information, the estimation error is solely dependent on the most up-to-date status update packet that has been delivered. Denote the generation time of the most up-to-date delivered status update packet before time $t$ as $g(t)$. When the packet is first generated at time $g(t)$, it is identical to the actual status. As the status evolves, the estimation error based on the packet continues to accumulate. Therefore, the estimation error at time $t$ is
\begin{eqnarray}
\label{eqn:q1}
Q(t) =  \int_{g(t)}^tA(\tau)\mathrm{d}\tau,
\end{eqnarray}
where $A(t)$ is the derivative of estimation error $Q(t)$ when there is no status information delivery, which can be any real number. There are several special cases under which UoI is equivalent to existing metrics: 
\begin{enumerate}
\item If the cost function is linear, and the estimation error $Q(t)$ increases linearly when status information is not updated (i.e., $A(t) = 1$), and the weight $\omega(t)$ is time-invariant, the UoI is equivalent to the conventional AoI. 
\item If the estimation error $Q(t)$ increases linearly when status information is not updated (i.e., $A(t) = 1$), and the weight $\omega(t)$ is time-invariant, the UoI is equivalent to the non-linear AoI defined in \cite{8000687}. 
\item If the weight $\omega(t)$ is time-invariant and $\delta(x) = x^2$, the UoI is equivalent to the squared error of status estimation. 
\end{enumerate}

Accordingly, the discrete-time expression of UoI is formulated as
\begin{equation*}F_t = \omega_t\delta\left(\sum_{\tau=g_t}^{t-1}A_{\tau}\right),\end{equation*}
where $A_t$ is the increment of error in the $t$-th time slot. The dynamic function of estimation error is
\begin{equation}
\label{eqn:dis}
Q_{t+1} = \left(1-D_t\right)Q_t + A_t + D_t\sum_{\tau=g_{t+1}}^{t-1}A_{\tau}, 
\end{equation}
where indicator $D_t=1$ represents that there is a successful status delivery in the $t$-th slot; otherwise $D_t = 0$. The last term at the right-hand side of Eq. (\ref{eqn:dis}) is resulted from the status change during the latency experienced by the latest status packet. 

The UoI implies how urgent the system is for a new status update. By reducing UoI, the context-aware timeliness of information can be better guaranteed. 

\begin{figure}
\centering\vspace*{-.15in}
\includegraphics[width=3.2in]{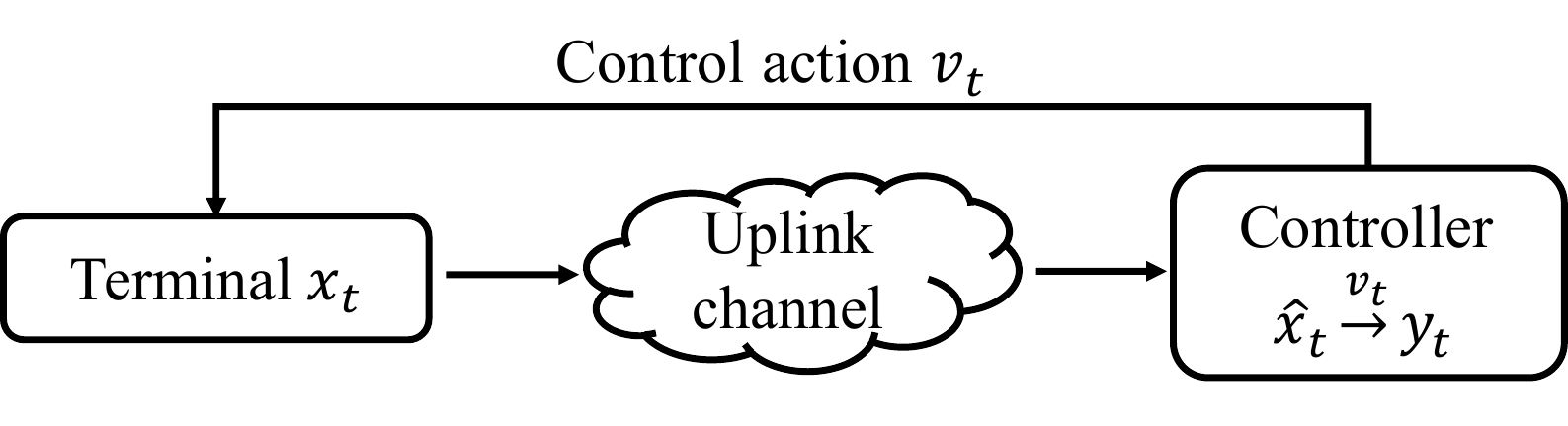}\vspace*{-.15in}
\caption{The tracking control of a linear system. }\vspace*{-.15in}
\label{fig:lqc}
\end{figure}

\subsection{Example: Tracking Control of Linear Systems}

Consider a remote control system where a controller remotely sends control decisions to a terminal based on previous status feedbacks, as shown in Fig. \ref{fig:lqc}. The status of the terminal at the $t$-th time slot is denoted as $x_t$, and the dynamic function of status evolution is:
$$x_t = ax_{t-1} + bv_t + r_t, $$
where $v_t$ is the control action at the $t$-th time slot, and noise $r_t$ is an independent and identically distributed (i.i.d.) random variable with finite variance. Without loss of generality, it is assumed that the expected value of $r_t$ is zero. To simplify the problem, it is also assumed that the control actions can be delivered to the terminal at each time slot without error, while the frequency of status feedbacks from the terminal is limited due to the constraints in the uplink channel. At each time slot, the controller decides on $v_t$ in order to keep the terminal status $\{x_t|t\in\mathcal{N}\}$ as close to the desired status $\{y_t|t\in\mathcal{N}\}$ as possible. The control performance is evaluated by the weighted squared tracking error $\omega_t(x_t-y_t)^2$, where the context-aware weight $\omega_t$ indicates the importance of tracking error at the $t$-th time slot. The objective is to minimize the weighted mean squared tracking error:
\begin{eqnarray}
\label{program:LQC}
&\min	& \limsup_{T\to\infty}\frac{1}{T}\sum_{t=0}^{T-1}\E{\omega_{t}\left(x_t - y_t\right)^2}. 
\end{eqnarray}
Due to the limitation of the uplink channel, the controller might not be able to obtain the latest status $x_{t-1}$. When this happens, it needs to first estimate the status of the terminal based on historical status information and control actions, and makes the optimal control decision $v^*_t$ based on the estimation $\hat{x}_{t-1}$. Therefore, problem (\ref{program:LQC}) consists of the optimizations of status updates and estimation-based control:
\begin{eqnarray}
\label{pro:control}
\min_{\mathrm{status~updates}}\quad\min_{v_t}\quad\limsup_{T\to\infty}\frac{1}{T}\sum_{t=0}^{T-1}\E{\omega_{t}\left(a\hat{x}_{t-1} + bv_t + r_t - y_t\right)^2}. 
\end{eqnarray}

\begin{proposition}
Problem (\ref{program:LQC}), which is to minimize the weighted squared difference between the actual state and the desired status, is equivalent to minimizing the weighted estimation error:
\begin{eqnarray}
\label{program:est}
\min_{\mathrm{status~updates}}\quad \limsup_{T\to\infty}\frac{1}{T}\sum_{t=0}^{T-1}\E{\omega_{t}\left(x_{t-1} - \hat{x}_{t-1}\right)^2}. 
\end{eqnarray}
\end{proposition}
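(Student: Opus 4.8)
The plan is to solve problem (\ref{pro:control}) from the inside out: first perform the inner minimization over the control action $v_t$ for a \emph{fixed} status-update policy (hence a fixed estimate $\hat{x}_{t-1}$), and then show that the resulting optimal value differs from the weighted estimation error of (\ref{program:est}) only by a multiplicative positive constant and an additive term that does not depend on the status-update policy. Since neither of these affects the minimizer, the two problems share the same optimal status-update policy and are therefore equivalent.

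First I would write the tracking error using the true dynamics $x_t=ax_{t-1}+bv_t+r_t$ together with the decomposition $x_{t-1}=\hat{x}_{t-1}+e_{t-1}$, where $e_{t-1}:=x_{t-1}-\hat{x}_{t-1}$ is the estimation error, obtaining $x_t-y_t=(a\hat{x}_{t-1}+bv_t-y_t)+ae_{t-1}+r_t$. The first bracket is measurable with respect to the information available to the controller; I would denote it $u_t$ and minimize $\E{\omega_t(u_t+ae_{t-1}+r_t)^2}$ over all admissible $v_t$. Because $\hat{x}_{t-1}$ is the conditional-mean (MMSE) estimate, $e_{t-1}$ is orthogonal to every quantity measurable at the controller, and $r_t$ is zero-mean and independent of the past; hence all cross terms vanish and the objective reduces to $\E{\omega_t u_t^2}+a^2\E{\omega_t e_{t-1}^2}+\E{\omega_t}\,\E{r_t^2}$ (the last factorization using independence of $r_t$ from $\omega_t$). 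This is minimized by taking $u_t=0$, i.e.\ the optimal control $v_t^\ast=\frac{1}{b}(y_t-a\hat{x}_{t-1})$, leaving $\E{\omega_t(x_t-y_t)^2}=a^2\E{\omega_t e_{t-1}^2}+\E{\omega_t}\,\E{r_t^2}$.

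Summing over $t$, dividing by $T$ and taking the $\limsup$, the tracking objective of (\ref{program:LQC}) becomes $a^2\,\limsup_{T\to\infty}\frac1T\sum_{t=0}^{T-1}\E{\omega_t e_{t-1}^2}$ plus the residual-noise term $\limsup_{T\to\infty}\frac1T\sum_{t=0}^{T-1}\E{\omega_t}\,\E{r_t^2}$. The noise term depends only on the given weight process and the fixed noise variance, so it is a constant with respect to the status-update policy, while the factor $a^2>0$ merely rescales the objective. Stripping both away leaves exactly the weighted estimation error in (\ref{program:est}), establishing the claimed equivalence.

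The step I expect to be the main obstacle is the justification that the cross terms vanish: this is where one must pin down that $\hat{x}_{t-1}$ is the MMSE estimate (so that the estimation error is orthogonal to the controller's information set) and that the context-aware weight $\omega_t$ is known at the decision epoch and independent of the current noise $r_t$. If $\omega_t$ were allowed to correlate with $r_t$ or with $e_{t-1}$, the clean decomposition would break and the two problems would no longer be equivalent; stating these measurability and independence assumptions explicitly is the crux of making the argument rigorous.
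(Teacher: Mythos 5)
Your proposal is correct and follows essentially the same route as the paper's Appendix A: derive the optimal control $v_t^\ast=\frac{1}{b}(y_t-a\hat{x}_{t-1})$, substitute it back to obtain $a^2\,\E{\omega_t(x_{t-1}-\hat{x}_{t-1})^2}+\E{\omega_t}\E{r_t^2}$, and observe that the additive noise term and the positive factor $a^2$ do not affect the minimizing status-update policy. The only difference is that you establish $v_t^\ast$ via completing the square and an explicit orthogonality argument for the cross terms (which is slightly more careful than the paper's derivative computation, where those independence assumptions are used implicitly), but the decomposition and conclusion are identical.
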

\begin{proof}
See Appendix A. 
\end{proof}

By Proposition 1, the minimization of weighted squared tracking error in the tracking control of linear systems is equivalent to minimizing weighted estimation error, i.e. UoI. As long as the uplink channel is not perfect, the controller cannot get the instantaneous knowledge of the actual status information each time it makes control actions, which degrades remote control performance. In order to achieve a better control performance, a sophisticated status update scheme is in need to deliver status information timely based on the context information and status evolution to reduce UoI. 

\subsection{Analogue to a Queuing System}
\begin{figure}
\centering\vspace*{-.15in}
\includegraphics[width=2.1in]{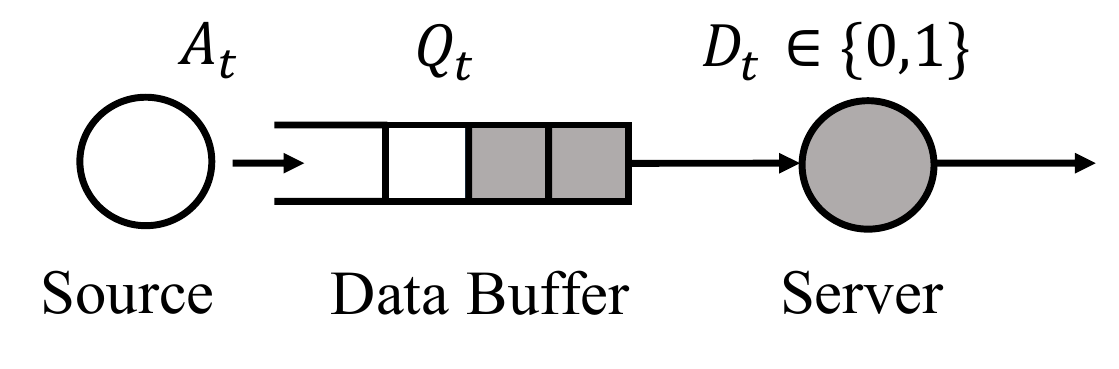}\vspace*{-.15in}
\caption{Analogue to a queuing system in which the data buffer would be emptied if there were a delivery. }\vspace*{-.15in}
\label{fig:queue}
\end{figure}

Consider a special case where the time-scale of status updates is much larger than the packet transmission time, i.e., status information can be instantaneously obtained and delivered by the terminal whenever it is scheduled. In this case, we have $g(t) = t$ when $D(t)=1$ in Eq. (\ref{eqn:q1}) and $g_t = t-1$ when $D_t=1$ in Eq. (\ref{eqn:dis}), and the two equations are thus written as
\begin{eqnarray}
\left\{
\begin{aligned}
&\frac{\mathrm{d}}{\mathrm{d}t}Q_t = A_t, &~\mathrm{if~}D_t = 0;\\
&Q_t = 0, &~\mathrm{if~}D_t = 1, 
\end{aligned}
\right.
\end{eqnarray}
and 
\begin{equation}
\label{eqn:Q}
Q_{t+1} = \left(1-D_t\right)Q_t + A_t.  
\end{equation}

The dynamic function of $Q_t$ in (\ref{eqn:Q}) is equivalent to the queuing system in Fig. \ref{fig:queue}, where the source generates $A_t$ packets at the $t$-th time slot, and the data buffer is emptied once the server completes a service, i.e., $D_t=1$. Note that different from conventional queuing systems, both ``arrivals'' $A_t$ and ``queue length'' $Q_t$ can be negative. Nonetheless, we can apply the methods in queuing theory to analyze UoI (with $\delta(x) = x^2$) as will be shown in the following sections. 

\section{Single Terminal Problem}
In this section, we look into the resource allocation for a single status update terminal, and design a context-aware update scheme to reduce UoI and improve the timeliness of status updates.

\subsection{System Model}
Consider a communication link in Fig. \ref{fig:single}, where a terminal constantly updates its status to a fusion center through a wireless channel. Due to the lack of channel resource or energy supplement, the terminal cannot send status information in every time slot. The maximum frequency of transmitting status updates to the fusion is $\rho\leq1$. The decision variable at each time slot is denoted by $U_t$. If the terminal sends its status at the $t$-th slot, we have $U_t=1$; otherwise $U_t=0$. Status packets are transmitted through a block fading channel with success probability $p$. The state of channel being good at the $t$-th slot is represented by $S_t=1$; otherwise $S_t=0$. Therefore, there is a successful status delivery at the $t$-th time slot \emph{if and only if} $D_t \triangleq U_tS_t=1$. 

\begin{figure}
\centering\vspace*{-.15in}
\includegraphics[width=3.5in]{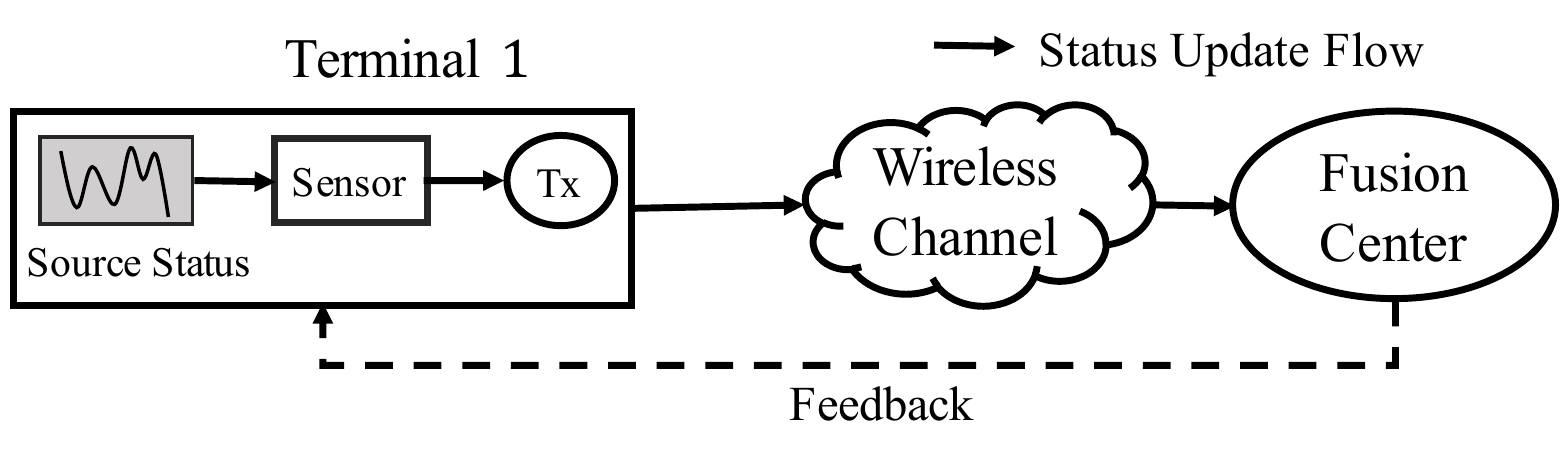}\vspace*{-.15in}
\caption{A  wireless communication system where there is a terminal updates its status to a fusion center. }\vspace*{-.15in}
\label{fig:single}
\end{figure}

Due to the randomness in status evolution, the increment $A_t$ of error is an random variable, and is assumed to have zero mean and variance $\sigma^2$. We also assumed that $A_t$ is independent of the current estimation error $Q_t$. An example is that when the monitored status follows Wiener process, the increment $A_t$ is an i.i.d Gaussian random variable. The time-varying context-aware weight $\omega_t$, which is associated with the context information, is assumed to be a random variable that has mean $\bar{\omega}$, and is independent of error $Q_t$. 

To reduce UoI so as to promote the timeliness of status updates, the terminal adaptively sends the latest status to the fusion center by deciding $U_t$ at every time slot. The average UoI minimization problem is formulated as
\begin{subequations}
\label{program:main-single}
\begin{align}
\min_{U_t} &\quad\limsup_{T\to\infty}\frac{1}{T}\E{\sum_{t=0}^{T-1}\omega_tQ_t^2}\label{obj:single}\\
\mathrm{s.t.}	&\quad \limsup_{T\to\infty}\frac{1}{T}\sum_{t=0}^{T-1}\E{U_t} \leq \rho,\label{eqn:con2}
\end{align}
\end{subequations}
where
\begin{equation}
\label{con:dynamic}
Q_{t+1} = \left(1 - U_tS_t\right)Q_t + A_t. 
\end{equation}

\subsection{Status Updating Scheme}
In order to satisfy the average transmission frequency constraint (\ref{eqn:con2}), we define the virtual queue $H_t$ as 
\begin{equation}
\label{eqn:vq}
H_{t+1} = \left[H_t - \rho + U_t\right]^+,
\end{equation}
where $[x]^+ = \max(0,x)$. The length of virtual queue can characterize the historical usage of the transmission budget: When the terminal sends a status update, the virtual queue $H_t$ increases by $(1-\rho)$; otherwise the virtual queue decreases by $\rho$. Therefore, the longer the virtual queue is, the more transmissions are performed. 
\begin{lemma}
\label{lemma:mrs}
With $H_0 < \infty$, Eq. (\ref{eqn:con2}) is satisfied as long as the virtual queue $H_t$ is mean rate stable, i.e., $\lim_{T\to\infty}\frac{\E{H_T}}{T} = 0. $
\end{lemma}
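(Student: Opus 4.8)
The plan is to exploit the standard virtual-queue drift argument. The key observation is that projection onto the nonnegative reals can only increase the value, i.e. $[x]^+ \geq x$ for every real $x$. Applying this to the recursion (\ref{eqn:vq}) gives the one-step bound
$$H_{t+1} = \left[H_t - \rho + U_t\right]^+ \geq H_t - \rho + U_t,$$
which rearranges to $H_{t+1} - H_t \geq U_t - \rho$. This removes the troublesome $\max$ operator and converts the recursion into a telescoping inequality, which is the whole engine of the proof.

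First I would sum this inequality over $t = 0, 1, \ldots, T-1$. The left-hand side telescopes to $H_T - H_0$, while the right-hand side becomes $\sum_{t=0}^{T-1} U_t - \rho T$. Rearranging yields $\sum_{t=0}^{T-1} U_t \leq H_T - H_0 + \rho T$. Next I would take expectations of both sides and divide by $T$, obtaining
$$\frac{1}{T}\sum_{t=0}^{T-1}\E{U_t} \leq \frac{\E{H_T}}{T} - \frac{\E{H_0}}{T} + \rho.$$
Finally I would take $\limsup_{T\to\infty}$ on both sides: since $H_0 < \infty$ the term $\E{H_0}/T$ vanishes, and mean rate stability guarantees $\E{H_T}/T \to 0$, leaving exactly the frequency bound (\ref{eqn:con2}).

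The argument is essentially routine, so there is no deep obstacle; the only genuinely delicate point is the very first step of dropping the $[\cdot]^+$. It is precisely the inequality $[x]^+ \geq x$ (rather than an equality) that makes the telescoping work in the correct direction, and retaining the $\max$ would block the clean summation. Beyond that, I would only need to check that the expectations are well defined and finite along the horizon so that the manipulations are legitimate, which follows from $H_0 < \infty$ together with the boundedness of the per-slot increment $U_t - \rho \in [-\rho, 1-\rho]$.
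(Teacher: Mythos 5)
Your proposal is correct and follows essentially the same route as the paper: both drop the $[\cdot]^+$ via $[x]^+\geq x$, telescope the resulting inequality down to $H_0$, divide by $T$, take expectations and limits, and invoke mean rate stability together with $H_0<\infty$. Your version is if anything slightly more careful in using $\limsup$ and noting the boundedness of the increments.
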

\begin{proof}
See Appendix B. 
\end{proof}

Lemma \ref{lemma:mrs} indicates that by designing a status updating scheme that keeps the virtual queue mean rate stable, the average frequency constraint (\ref{eqn:con2}) is naturally guaranteed. In order to obtain a feasible updating scheme, define a Lyapunov function as $L_t =  \frac{1}{2}VH^2_t + \theta Q_t^2$, where $\theta$ is a positive real numbers that will be further determined, and $V$ is also a positive real numbers that represents the tradeoff between convergence and performance. The Lyapunov drift is then defined as 
\begin{equation}
\label{def:drift}
\Delta_t = \E{L_{t+1} - L_t|Q_t, \omega_{t+1}, H_t}.
\end{equation}

\begin{lemma}
\label{lemma:bound-single}
Denote the penalty at the $t$-th time slot by $f_t$. If $\E{L_0}<\infty$, $\E{f_t}\geq \bar f_\mathrm{min}$ and in each time slot $t\in\{0,1,2\cdots\}$, 
\begin{eqnarray}
\label{eqn:optimization-single}
\E{L_{t+1} - L_t + f_t} \leq C,
\end{eqnarray}
then the virtual queue $H_t$ is mean rate stable, and
\begin{equation}
\label{eqn:bound-single}
\limsup_{T\to\infty}\frac{1}{T}\sum_{t=0}^{T-1}\E{f_t} \leq C. 
\end{equation}
\end{lemma}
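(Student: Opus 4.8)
The plan is to exploit the telescoping structure of the per-slot drift-plus-penalty inequality (\ref{eqn:optimization-single}). First I would sum the hypothesis $\E{L_{t+1} - L_t + f_t} \leq C$ over $t = 0, 1, \ldots, T-1$. The Lyapunov differences collapse by telescoping, and since the hypothesis is stated with the full expectation $\E{\cdot}$, no conditioning needs to be tracked through the sum. This leaves the single master inequality
\begin{equation*}
\E{L_T} - \E{L_0} + \sum_{t=0}^{T-1}\E{f_t} \leq CT,
\end{equation*}
from which both conclusions of the lemma will be extracted by rearranging and invoking the appropriate one-sided bound on the remaining terms.

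To establish mean rate stability, I would use that $L_t = \frac{1}{2}VH_t^2 + \theta Q_t^2 \geq 0$ because $V,\theta>0$, so in particular $\E{L_T} \geq \frac{1}{2}V\E{H_T^2}$. Combining this with the master inequality and the lower bound $\E{f_t} \geq \bar{f}_{\mathrm{min}}$ (which gives $\sum_{t=0}^{T-1}\E{f_t}\geq T\bar{f}_{\mathrm{min}}$) yields $\frac{1}{2}V\E{H_T^2} \leq (C - \bar{f}_{\mathrm{min}})T + \E{L_0}$. Since mean rate stability is phrased in terms of the first moment, I would then apply Jensen's inequality $(\E{H_T})^2 \leq \E{H_T^2}$ and take square roots to obtain $\E{H_T} \leq \sqrt{(2/V)\bigl[(C-\bar{f}_{\mathrm{min}})T + \E{L_0}\bigr]}$. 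Dividing by $T$, the right-hand side decays like $T^{-1/2}$ as $T\to\infty$ (here the hypothesis $\E{L_0}<\infty$ guarantees the $\E{L_0}/T^2$ remainder under the root vanishes), so $\lim_{T\to\infty}\E{H_T}/T = 0$.

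For the performance bound I would again start from the master inequality, this time discarding the nonnegative term $\E{L_T}\geq 0$ rather than the penalty sum. This gives $\sum_{t=0}^{T-1}\E{f_t} \leq CT + \E{L_0} - \E{L_T} \leq CT + \E{L_0}$. Dividing by $T$ and taking $\limsup_{T\to\infty}$, the term $\E{L_0}/T$ vanishes by $\E{L_0}<\infty$, delivering $\limsup_{T\to\infty}\frac{1}{T}\sum_{t=0}^{T-1}\E{f_t} \leq C$, which is (\ref{eqn:bound-single}).

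I do not expect a deep obstacle here, as this is the standard Lyapunov drift-plus-penalty argument; the proof is essentially bookkeeping once the telescoping is set up. The only points requiring care are: (i) verifying $L_t \geq 0$, which rests on $V,\theta>0$ and is precisely what permits both lower-bounding $\E{L_T}$ by the queue term alone and dropping it in the penalty bound; (ii) the passage from the second moment $\E{H_T^2}$ controlled by the Lyapunov function to the first moment $\E{H_T}$ appearing in the definition of mean rate stability, handled by Jensen; and (iii) consistently invoking $\E{L_0}<\infty$ so that the initial-condition remainders decay and do not pollute the asymptotics.
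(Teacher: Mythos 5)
Your proposal is correct and follows essentially the same route as the paper's proof in Appendix C: telescope the drift-plus-penalty inequality, lower-bound $\E{L_T}$ by $\tfrac{1}{2}V\E{H_T^2}$ and apply Jensen to get mean rate stability, then drop $\E{L_T}\geq 0$ and take the $\limsup$ for the penalty bound. The only (cosmetic) difference is that you discard $\E{L_T}$ before passing to the limit, which is slightly cleaner than the paper's splitting of the $\limsup$ across the two terms.
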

\begin{proof}
See Appendix \ref{sec:lemma2}. 
\end{proof}

By Lemma \ref{lemma:bound-single}, as long as the sum of expected drift and penalty is no larger than a constant, the virtual queue is mean rate stable, and the average penalty is upper-bounded. Therefore, the problem of reducing average penalty while keeping the virtual queue mean rate stable can be simplified to reducing the expected sum of Lynapunov drift and penalty at each time slot. 

\begin{lemma}
\label{lemma:drift-single}
Given estimation error $Q_t$, context-aware weight at the next time slot $\omega_{t+1}$, and virtual queue length $H_t$, the sum of Lynapunov drift and expected penalty $f_t$ satisfies
\begin{eqnarray}
\label{eqn:drift-single}
\Delta_t + \E{f_t|Q_t, \omega_{t+1}, H_t}
&\leq&  \theta\sigma^2 + \frac{1}{2}V - V\rho H_t + \E{f_t|Q_t, \omega_{t+1}, H_t}\notag\\
&~& + (VH_t - \theta pQ_t^2)\E{U_t|Q_t, \omega_{t+1}, H_t}.
\end{eqnarray}
\end{lemma}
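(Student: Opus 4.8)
The plan is to expand the Lyapunov drift $\Delta_t$ directly from the definition $L_t=\frac{1}{2}VH_t^2+\theta Q_t^2$, treating the virtual-queue part and the estimation-error part separately, and then to re-attach the penalty term to both sides so that the inequality matches the form required by Lemma~\ref{lemma:bound-single}. Concretely, I would write $\Delta_t = \frac{1}{2}V\,\E{H_{t+1}^2-H_t^2|Q_t,\omega_{t+1},H_t}+\theta\,\E{Q_{t+1}^2-Q_t^2|Q_t,\omega_{t+1},H_t}$ and bound each conditional expectation in turn.

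For the virtual-queue part, I would first drop the projection via the elementary inequality $([x]^+)^2\le x^2$, giving $H_{t+1}^2\le(H_t-\rho+U_t)^2$. Expanding and using $U_t\in\{0,1\}$ (so $U_t^2=U_t$) yields $H_{t+1}^2-H_t^2\le(U_t-\rho)^2+2H_t(U_t-\rho)$. Since $\rho\le 1$ and $U_t\in\{0,1\}$, the square $(U_t-\rho)^2$ never exceeds $1$, which after multiplication by $\tfrac12 V$ accounts for the constant $\tfrac12 V$; the linear remainder separates into the deterministic drift $-V\rho H_t$ and the control-dependent term $VH_tU_t$, whose conditional expectation is $VH_t\,\E{U_t|Q_t,\omega_{t+1},H_t}$.

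For the estimation-error part, I would substitute the dynamics (\ref{con:dynamic}) and exploit $U_tS_t\in\{0,1\}$, hence $(1-U_tS_t)^2=1-U_tS_t$, to obtain $Q_{t+1}^2-Q_t^2=-U_tS_tQ_t^2+2(1-U_tS_t)Q_tA_t+A_t^2$. Taking the conditional expectation is the decisive step: because $A_t$ is zero-mean and independent of $Q_t$, of the channel state $S_t$, and of the decision $U_t$, the cross term vanishes in expectation; the noise term contributes $\E{A_t^2}=\sigma^2$; and since $S_t$ is independent of $U_t$ with $\E{S_t}=p$, the first term becomes $-pQ_t^2\,\E{U_t|Q_t,\omega_{t+1},H_t}$. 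Multiplying by $\theta$ produces $\theta\sigma^2-\theta pQ_t^2\,\E{U_t|Q_t,\omega_{t+1},H_t}$.

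Adding the two bounds collects the constant $\theta\sigma^2+\tfrac12 V$, the term $-V\rho H_t$, and a coefficient $(VH_t-\theta pQ_t^2)$ multiplying $\E{U_t|Q_t,\omega_{t+1},H_t}$; finally adding $\E{f_t|Q_t,\omega_{t+1},H_t}$ to both sides reproduces (\ref{eqn:drift-single}) exactly, the penalty being carried along symbolically so the result plugs directly into Lemma~\ref{lemma:bound-single}. I do not expect a genuine obstacle here, since this is a standard drift computation; the one point that deserves care is justifying that the cross term $2(1-U_tS_t)Q_tA_t$ has zero conditional mean, which relies on $A_t$ being independent of the jointly-realized pair $(U_t,S_t)$ — consistent with the causal ordering in which $U_t$ is selected from $(Q_t,\omega_{t+1},H_t)$ before the increment $A_t$ and the channel state $S_t$ are revealed.
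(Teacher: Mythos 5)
Your proposal is correct and follows essentially the same route as the paper's own proof: bound $H_{t+1}^2-H_t^2$ by dropping the projection and using $(U_t-\rho)^2\le 1$, expand $Q_{t+1}^2-Q_t^2$ via the idempotency of $U_tS_t$, kill the cross term using the zero mean and independence of $A_t$, replace $\E{U_tS_t|\cdot}$ by $p\,\E{U_t|\cdot}$, and add the penalty to both sides. The only cosmetic difference is that you make explicit the steps $([x]^+)^2\le x^2$ and $(U_t-\rho)^2\le 1$ that the paper applies implicitly.
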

\begin{proof}
See Appendix \ref{sec:lemma3}. 
\end{proof}

Letting $f_t = \omega_{t+1}Q_{t+1}^2$, which is the UoI at the $(t+1)$-th time slot, we have
\begin{eqnarray}
\label{eqn:penalty-e}
\E{f_t|Q_t, \omega_{t+1}, H_t}&=& \omega_{t+1}\E{\left(\left(1-U_tS_t\right)Q_t + A_t\right)^2|Q_t, \omega_{t+1}, H_t}\notag\\
&=& \omega_{t+1}Q_t^2\left(1-p\E{U_t|Q_t, \omega_{t+1}, H_t}\right) + \omega_{t+1}\sigma^2.
\end{eqnarray}
Substituting Eq. (\ref{eqn:penalty-e}) into Eq. (\ref{eqn:drift-single}) yields
\begin{eqnarray}
\label{eqn:right-single}
\Delta_t + \E{f_t|Q_t, \omega_{t+1}, H_t}
&\leq& (\omega_{t+1}+\theta)\sigma^2 + \frac{1}{2}V - V\rho H_t + \omega_{t+1}Q_t^2\notag\\
&~& + (VH_t - (\omega_{t+1}+\theta) pQ_t^2)\E{U_t|Q_t, \omega_{t+1}, H_t}.
\end{eqnarray}
By minimizing the right-hand side of the above inequality, we get a status update scheme in the following form:
\begin{eqnarray}
\label{policy:original}
\begin{aligned}
\min_{U_t}&\quad (VH_t - (\omega_{t+1}+\theta) pQ_t^2)U_t \\
\mathrm{s.t.}&\quad U_t \in \{0,1\}.
\end{aligned}
\end{eqnarray}

Next, we optimize parameter $\theta$ to reduce the right-hand side of (\ref{eqn:right-single}). Note that scheme (\ref{policy:original}) always minimizes the right-hand side of (\ref{eqn:right-single}), and the stationary randomized scheme that independently updates with probability $\rho$ at each time slot is also a feasible scheme. Substituting the stationary randomized scheme into the right-hand side of (\ref{eqn:right-single}) yields
\begin{eqnarray*}
\E{L_{t+1} - L_t + f_t|Q_t, \omega_{t+1}, H_t}
&\leq& (\omega_{t+1}+\theta)\sigma^2 + \frac{1}{2}V + \left(\omega_{t+1}(1-p\rho) - \theta p\rho\right)Q_t^2.
\end{eqnarray*}
Taking expectation yields
\begin{eqnarray*}
\E{L_{t+1} - L_t + f_t|Q_t}
&\leq& (\bar\omega+\theta)\sigma^2 + \frac{1}{2}V + \left(\bar\omega(1-p\rho) - \theta p\rho\right)Q_t^2.
\end{eqnarray*}
If $\bar\omega(1-p\rho) - \theta p\rho\leq0$, the right-hand side is no greater than a constant. Therefore, letting $\theta = \frac{\bar\omega(1-p\rho)}{p\rho}$ yields
\begin{eqnarray}
\label{eqn:bound-final-single}
\E{L_{t+1} - L_t + f_t|Q_t}
\leq \frac{\bar\omega\sigma^2}{p\rho} + \frac{1}{2}V.
\end{eqnarray}
Then scheme (\ref{policy:original}) becomes solving
\begin{subequations}
\label{p:main-single}
\begin{align}
\min_{U_t}&\quad\left(VH_t - \left(\omega_{t+1}-\bar\omega+\frac{\bar\omega}{p\rho}\right)pQ_t^2\right)U_t\label{eqn:policy-obj}\\
\mathrm{s.t.}&\quad U_t \in \{0,1\}.
\end{align}
\end{subequations}
Next we provide the details of the solution to (\ref{p:main-single}).
\begin{definition}
In a single-terminal status update system with an average updating frequency constraint, the update index $J_t$ at the $t$-th time slot is defined as
\begin{equation}
\label{eqn:ui}
J_t=\left(\omega_{t+1}-\bar\omega+\frac{\bar\omega}{p\rho}\right)pQ_t^2.
\end{equation}
\end{definition}
\begin{proposition}
The solution to scheme (\ref{p:main-single}) is
\begin{eqnarray}
\label{eqn:u-single}
U_t=\left\{
\begin{aligned}
&1, &&\mathrm{~if~}J_t > VH_t,\\
&0, &&\mathrm{~if~}J_t \leq VH_t.
\end{aligned}
\right.
\end{eqnarray}
\end{proposition}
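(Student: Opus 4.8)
The plan is to observe that problem (\ref{p:main-single}) is a linear program in the single binary decision variable $U_t$, so it can be solved by directly comparing the objective at the only two feasible points $U_t=0$ and $U_t=1$. First I would substitute the definition of the update index $J_t$ from (\ref{eqn:ui}) into the objective (\ref{eqn:policy-obj}), rewriting it compactly as $(VH_t - J_t)U_t$. Since $V$, $H_t$, $\omega_{t+1}$, $Q_t$, $p$, $\rho$ and $\bar\omega$ are all fixed (given) at the decision epoch, the coefficient $VH_t - J_t$ is a known scalar, and the objective is an affine function of $U_t$ that vanishes at $U_t=0$.

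Next I would evaluate this objective at each admissible value: at $U_t=0$ it equals $0$, while at $U_t=1$ it equals $VH_t - J_t$. The optimal choice is whichever of these two values is smaller, so I would split into cases on the sign of the coefficient. If $J_t > VH_t$, then the value $VH_t - J_t$ attained at $U_t=1$ is strictly negative and hence strictly smaller than the value $0$ attained at $U_t=0$, so $U_t=1$ is optimal. If $J_t < VH_t$, then $VH_t - J_t > 0$, so $U_t=0$ is optimal. If $J_t = VH_t$, the two objective values coincide at $0$, so either action is optimal and I would adopt the tie-breaking convention $U_t=0$, which the statement absorbs into the ``$\leq$'' branch. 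Collecting the cases reproduces exactly the threshold rule (\ref{eqn:u-single}).

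Because the feasible set has only two points, there is no genuine obstacle: the argument reduces to a one-line sign comparison. The only point requiring a deliberate choice is the tie at $J_t = VH_t$, where both actions yield the same (optimal) value of the drift-plus-penalty bound in (\ref{eqn:right-single}); resolving it toward $U_t=0$ is consistent with conserving the transmission budget in (\ref{eqn:con2}) and does not affect the attained minimum. I would close by remarking that this per-slot minimizer is precisely the scheme used to derive (\ref{eqn:bound-final-single}), so the resulting threshold policy inherits the mean-rate stability and bounded-UoI guarantees established in Lemmas \ref{lemma:mrs}--\ref{lemma:drift-single}.
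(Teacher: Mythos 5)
Your proposal is correct and matches the paper's reasoning: the paper states this proposition without an explicit proof precisely because, as you observe, the objective $(VH_t - J_t)U_t$ is linear in the single binary variable $U_t$, so the minimizer is determined by the sign of the coefficient, with the tie at $J_t = VH_t$ resolved to $U_t = 0$ to conserve the transmission budget. Nothing further is needed.
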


The update index equals to the reduction of the sum of expected future UoI with a status transmission. According to Eq. (\ref{eqn:u-single}), the status update terminal first computes its update index $J_t$ by estimation error $Q_t$ and next-step context-aware weight $\omega_{t+1}$ at each time slot. If the update index is larger than $VH_t$, there will be a status transmission. In other words, update index measures the necessity of status transmission with the consideration of the current context and status estimation error, while the virtual queue is a dynamic threshold that ensures to meet the average status update frequency constraint.

\subsection{Performance Analysis}
\begin{theorem}
\label{thm:single}
Under updating scheme (\ref{eqn:u-single}), the average status update frequency constraint (\ref{eqn:con2}) is satisfied, and the average UoI is upper bounded as
\begin{eqnarray}
\limsup_{T\to\infty}\frac{1}{T}\sum_{t=0}^{T-1}\omega_tQ_t^2\leq\frac{\bar{\omega}\sigma^2}{p\rho} + \frac{1}{2}V. 
\end{eqnarray}
\end{theorem}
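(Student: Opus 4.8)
The plan is to recognize that this statement is the endpoint of the drift-plus-penalty argument already set up through Lemmas \ref{lemma:mrs}--\ref{lemma:drift-single}, so the proof is mainly an assembly step. I set the target constant $C = \frac{\bar\omega\sigma^2}{p\rho} + \frac{1}{2}V$ and the per-slot penalty $f_t = \omega_{t+1}Q_{t+1}^2$, and I work toward verifying the three hypotheses of Lemma \ref{lemma:bound-single} for the policy (\ref{eqn:u-single}), after which both claims fall out of Lemmas \ref{lemma:bound-single} and \ref{lemma:mrs}.

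First I would establish the key per-slot inequality $\E{L_{t+1}-L_t+f_t}\le C$ under scheme (\ref{eqn:u-single}). The chain is: Lemma \ref{lemma:drift-single} together with the penalty identity gives the conditional bound (\ref{eqn:right-single}), whose right-hand side is an affine function of $\E{U_t|Q_t,\omega_{t+1},H_t}$ with coefficient $VH_t - J_t$ (the update index $J_t$ of (\ref{eqn:ui}) being exactly $(\omega_{t+1}+\theta)pQ_t^2$ once $\theta=\frac{\bar\omega(1-p\rho)}{p\rho}$ is fixed). Scheme (\ref{eqn:u-single}) is precisely the minimizer of this affine expression over $U_t\in\{0,1\}$, so its right-hand side is no larger than that obtained by \emph{any} feasible action, in particular the stationary randomized action with $\E{U_t}=\rho$. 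Substituting that benchmark cancels the $H_t$ terms and, after taking expectation over $\omega_{t+1}$ (mean $\bar\omega$) and using the chosen $\theta$, collapses the $Q_t^2$ coefficient to zero and leaves exactly the constant $C$ — this is (\ref{eqn:bound-final-single}). A further expectation over $Q_t$ removes the conditioning and yields $\E{L_{t+1}-L_t+f_t}\le C$ for the actual policy.

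Next I would check the remaining hypotheses of Lemma \ref{lemma:bound-single}: the initialization $\E{L_0}<\infty$ holds whenever $H_0<\infty$ and $Q_0$ has finite second moment, and the lower bound $\E{f_t}\ge\bar f_\mathrm{min}$ holds with $\bar f_\mathrm{min}=0$ since $f_t=\omega_{t+1}Q_{t+1}^2\ge0$. Lemma \ref{lemma:bound-single} then delivers two conclusions at once: the virtual queue $H_t$ is mean rate stable, and $\limsup_{T\to\infty}\frac1T\sum_{t=0}^{T-1}\E{f_t}\le C$. Mean rate stability of $H_t$ with $H_0<\infty$ feeds into Lemma \ref{lemma:mrs} to give the frequency constraint (\ref{eqn:con2}), proving the first assertion. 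For the second assertion I would reindex the penalty average: since $f_t=\omega_{t+1}Q_{t+1}^2$, the sum $\sum_{t=0}^{T-1}\E{f_t}$ equals $\sum_{t=1}^{T}\E{\omega_tQ_t^2}$, which differs from $\sum_{t=0}^{T-1}\E{\omega_tQ_t^2}$ only by two boundary terms scaled by $1/T$; these vanish in the limit, so the UoI time-average (understood in expectation, consistent with the objective (\ref{obj:single})) inherits the same bound $C$.

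I expect the only genuinely delicate point to be the transfer of the bound from the randomized benchmark to the actual policy (\ref{eqn:u-single}): it relies on the minimization property of the threshold rule, i.e. that the minimum over $U_t\in\{0,1\}$ of the affine drift-plus-penalty expression is dominated by its value at the relaxed/randomized action $\E{U_t}=\rho$. Everything else is bookkeeping — the $H_t$ cancellation, the algebra fixing $\theta$, and the harmless index shift — so the crux is simply invoking the optimality of (\ref{eqn:u-single}) to license comparison with the stationary randomized scheme.
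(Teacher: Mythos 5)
Your proposal is correct and follows essentially the same route as the paper: establishing the per-slot drift-plus-penalty bound (\ref{eqn:bound-final-single}) by comparing the minimizing policy (\ref{eqn:u-single}) against the stationary randomized benchmark with the chosen $\theta$, then invoking Lemma \ref{lemma:bound-single} for both mean rate stability (hence the frequency constraint via Lemma \ref{lemma:mrs}) and the UoI bound. You actually supply details the paper leaves implicit --- verifying $\E{L_0}<\infty$, taking $\bar f_{\mathrm{min}}=0$, and the harmless index shift from $f_t=\omega_{t+1}Q_{t+1}^2$ to the stated time average --- so no gaps.
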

\begin{proof}
Taking expectation over both sides of inequality (\ref{eqn:bound-final-single}) yields
$$\E{L_{t+1} - L_t + f_t}
\leq \frac{\bar\omega\sigma^2}{p\rho} + \frac{1}{2}V.$$
By the definition of penalty $f_t$ and Lemma \ref{lemma:bound-single}, the theorem is hereby proved. 
\end{proof}

According to Theorem \ref{thm:single}, under updating scheme (\ref{eqn:u-single}), a smaller parameter $V$ leads to a lower the UoI upper-bound. However, when parameter $V$ is small, the updating scheme (\ref{eqn:u-single}) pays little attention to the updating frequency constraint, which could lead to a severe overuse of transmission budget at some critical period, and result in service outage afterwards. 

\section{Multi-Terminal Scheduling}
In this section, we investigate the reduction of average UoI in an uplink status update system with multiple terminals sharing wireless resources, so that the interaction between different terminals needs to be considered. By scheduling the status updates of multiple terminals, the average UoI of the system is optimized. 

\subsection{System Model}
\begin{figure}
\centering\vspace*{-.15in}
\includegraphics[width=2.8in]{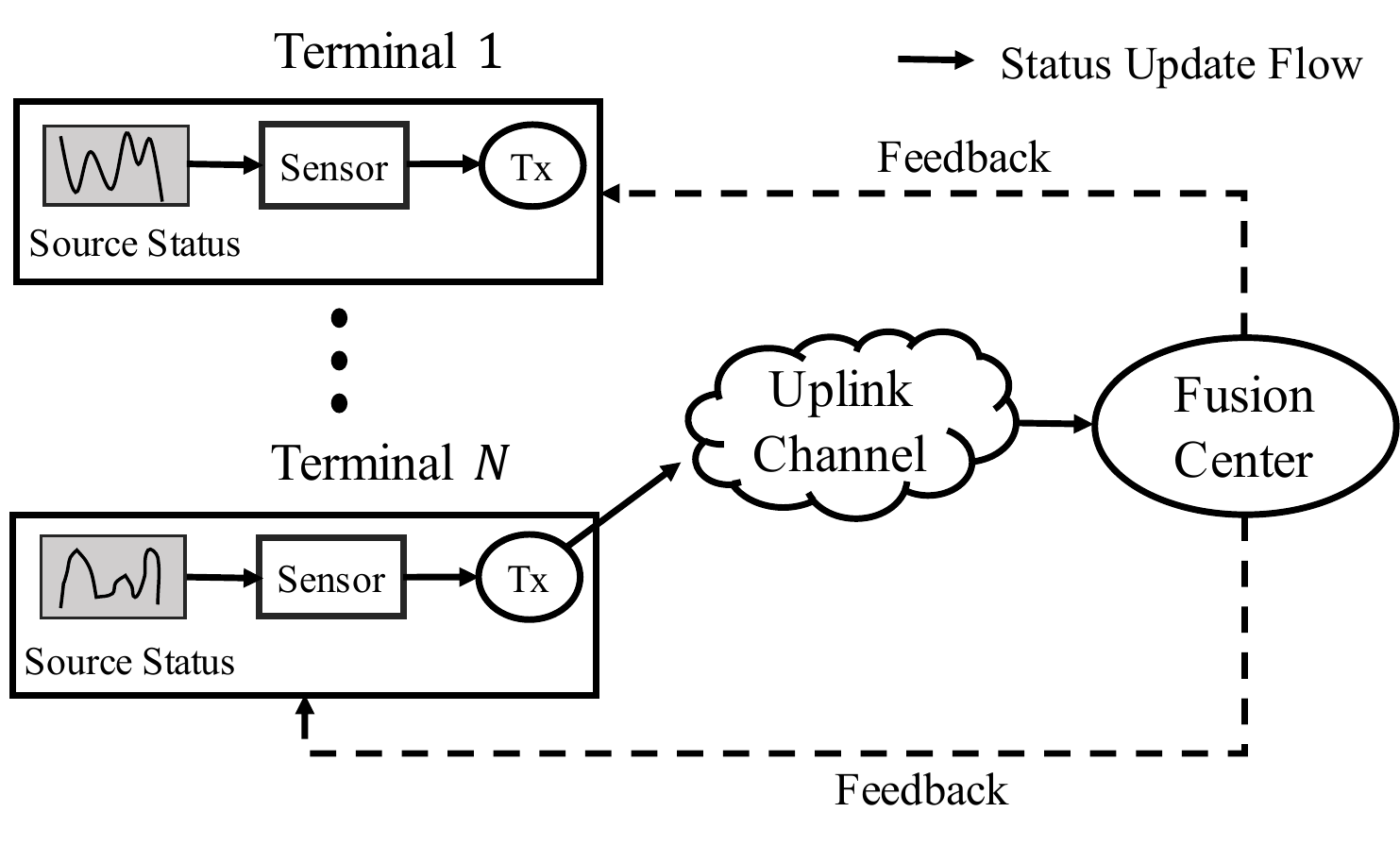}\vspace*{-.15in}
\caption{A wireless communication system consisting of $N$ terminals for time-sensitive status updates. }\vspace*{-.15in}
\label{fig:control}
\end{figure}

Consider a wireless communications system with $N$ status updates terminals, as illustrated in Fig. \ref{fig:control}. Due to limited channel resources, at each time slot at most $K$ terminals can be scheduled to transmit their status update packets simultaneously. The rest assumptions about status dynamics, context-aware weight, scheduling decisions and channel states are the same as those in Sec. III, with an additional subscripted denoting the index of the underlying terminal. For example, scheduling decisions at the $t$-th time slot is denoted by a vector $\boldsymbol{U}_t = \left(U_{1,t}, U_{2,t}, \cdots, U_{N,t}\right)$, where $U_{i,t}=1$ means that the $i$-th status updates terminal is scheduled at the $t$-th slot. The transmission success probability of the $i$-th terminal is $p_i$, and the state of the $i$-th status updates terminal's channel being good at the $t$-th slot is represented by $S_{i,t}=1$. Estimation error increment $A_{i,t}$ has zero mean and variance $\sigma_i^2$ and is independent among different time slots and terminals. The context-aware weight $\omega_{i,t}$ has mean $\bar{\omega}_i$, and is independent of error $Q_{i,t}$. 

To improve the timeliness of status updates, the scheduler tries to adaptively schedule the transmission of each terminal by reducing the average UoI. The average UoI minimization problem is formulated as
\begin{subequations}
\label{program:main}
\begin{eqnarray}
\label{program:main1}
\min_{\boldsymbol{U}_t} &&\limsup_{T\to\infty}\frac{1}{NT}\E{\sum_{t=0}^{T-1}\sum_{i=1}^{N} \omega_{i,t}Q_{i,t}^2}\\
\mathrm{s.t.} &&\sum_{i=1}^N U_{i,t} \leq K, \forall t \in \mathbb{N}, \label{eqn:con}
\end{eqnarray}
\end{subequations}
where the dynamic function of estimation error $Q_{i,t}$ is
\begin{equation}
\label{eqn:dynamic}
Q_{i,t+1} = \left(1-U_{i,t}S_{i,t}\right)Q_{i,t} + A_{i,t}.
\end{equation}
Eq. (\ref{eqn:con}) poses the constraint on maximum number of terminals to update their status. By carefully designing scheduling decision $\boldsymbol{U}_t$ at each time slot, we try to reduce the average UoI. 

\subsection{Scheduling Scheme}
Define a Lyapunov function as $L_t =  \sum_{i=1}^N\theta_i Q_{i,t}^2$, where $\theta_i$ are positive. At the meantime, the Lyapunov drift is defined as 
\begin{equation}
\label{def:drift-multi}
\Delta_t = \E{L_{t+1} - L_t|\boldsymbol{Q}_t, \boldsymbol{\omega}_{t+1}}.
\end{equation}
Let the penalty at the $t$-th time slot be the UoI at the $(t+1)$-th time slot, i.e., $f_t = \sum_{i=1}^N\omega_{i,t+1}Q_{i,t+1}^2$. Thus, we have the following lemma:
\begin{lemma}
Given $\boldsymbol{Q}_t$ and $\boldsymbol{\omega}_{t+1}$ at the $t$-th time slot, the drift-plus-penalty is 
\begin{eqnarray}
\label{eqn:driftpluspenalty-multi}
\Delta_t + \E{f_t|\boldsymbol{Q}_t, \boldsymbol{\omega}_{t+1}}
&=&  - \sum_{i=1}^N\left(\theta_i+\omega_{i,t+1}\right)p_iQ_{i,t}^2\E{U_{i,t}|\boldsymbol{Q}_t, \boldsymbol{\omega}_{t+1}}\notag\\
&~& + \sum_{i=1}^N\left(\theta_i+\omega_{i,t+1}\right)\sigma_i^2 + \sum_{i=1}^N\omega_{i,t+1}Q_{i,t}^2.
\end{eqnarray}
\end{lemma}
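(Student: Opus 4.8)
The plan is to evaluate the drift-plus-penalty directly from its definition by computing the conditional second moment of each per-terminal error $Q_{i,t+1}$, since no virtual queue appears in $L_t = \sum_{i=1}^N\theta_iQ_{i,t}^2$ and the result should therefore be an exact identity rather than a bound. Writing $\Delta_t + \E{f_t|\boldsymbol{Q}_t,\boldsymbol{\omega}_{t+1}} = \E{L_{t+1}+f_t|\boldsymbol{Q}_t,\boldsymbol{\omega}_{t+1}} - L_t$, I would first merge the Lyapunov function at slot $t+1$ with the penalty. Because both are quadratic in the same errors, $L_{t+1}+f_t = \sum_{i=1}^N(\theta_i+\omega_{i,t+1})Q_{i,t+1}^2$, so the whole task reduces to finding $\E{Q_{i,t+1}^2|\boldsymbol{Q}_t,\boldsymbol{\omega}_{t+1}}$ for each $i$ and then subtracting $L_t$.

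Next I would expand the square of the dynamic equation (\ref{eqn:dynamic}), $Q_{i,t+1}=(1-U_{i,t}S_{i,t})Q_{i,t}+A_{i,t}$, into three terms and take the conditional expectation of each, exploiting three facts. First, $U_{i,t}S_{i,t}$ is a $\{0,1\}$-valued indicator, so $(1-U_{i,t}S_{i,t})^2 = 1-U_{i,t}S_{i,t}$, which collapses the leading coefficient. Second, the scheduling decision $U_{i,t}$ is made before the current channel realization $S_{i,t}$ and is therefore independent of it, so $\E{U_{i,t}S_{i,t}|\boldsymbol{Q}_t,\boldsymbol{\omega}_{t+1}} = p_i\E{U_{i,t}|\boldsymbol{Q}_t,\boldsymbol{\omega}_{t+1}}$ with $p_i=\Prob{S_{i,t}=1}$. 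Third, $A_{i,t}$ has zero mean, variance $\sigma_i^2$, and is independent of $Q_{i,t}$, $U_{i,t}$ and $S_{i,t}$, so the cross term vanishes and the square contributes exactly $\sigma_i^2$. Combining these gives $\E{Q_{i,t+1}^2|\boldsymbol{Q}_t,\boldsymbol{\omega}_{t+1}} = Q_{i,t}^2\bigl(1-p_i\E{U_{i,t}|\boldsymbol{Q}_t,\boldsymbol{\omega}_{t+1}}\bigr)+\sigma_i^2$, which is the vector version of the single-terminal computation behind Eq. (\ref{eqn:penalty-e}).

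Finally I would substitute this per-terminal expression back, weight by $(\theta_i+\omega_{i,t+1})$, sum over $i$, and subtract $L_t$. The $\theta_iQ_{i,t}^2$ contributions cancel against $L_t$, leaving the residual $\sum_i\omega_{i,t+1}Q_{i,t}^2$; the terms carrying $\E{U_{i,t}|\boldsymbol{Q}_t,\boldsymbol{\omega}_{t+1}}$ assemble into the first, negative sum, and the variance terms into $\sum_i(\theta_i+\omega_{i,t+1})\sigma_i^2$, yielding precisely the claimed identity. I do not expect a genuine obstacle, since the computation is essentially the vector analogue of Lemma~\ref{lemma:drift-single}; the only point requiring care is the independence of $U_{i,t}$ from $S_{i,t}$ used in the second fact, as it is this ordering assumption—decision precedes channel outcome—that justifies replacing $\E{U_{i,t}S_{i,t}|\boldsymbol{Q}_t,\boldsymbol{\omega}_{t+1}}$ by $p_i\E{U_{i,t}|\boldsymbol{Q}_t,\boldsymbol{\omega}_{t+1}}$ rather than leaving a joint expectation. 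With that justified, everything else is algebraic collection of terms, and the statement holds as an equality precisely because, unlike the single-terminal case, there is no $[\cdot]^+$ virtual-queue term to introduce an inequality.
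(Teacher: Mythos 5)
Your proposal is correct and follows essentially the same route as the paper: both expand the squared dynamics $Q_{i,t+1}^2$ using $(1-U_{i,t}S_{i,t})^2 = 1-U_{i,t}S_{i,t}$, the zero-mean independence of $A_{i,t}$, and $\E{U_{i,t}S_{i,t}\mid\cdot} = p_i\E{U_{i,t}\mid\cdot}$, then collect terms. The only cosmetic difference is that you merge $L_{t+1}+f_t$ into a single weighted sum before taking expectations, whereas the paper computes the drift and the penalty expectation separately and adds them.
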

\begin{proof}
See Appendix \ref{sec:lemma4}. 
\end{proof}

By minimizing the right-hand side of Eq. (\ref{eqn:driftpluspenalty-multi}), we obtain a scheduling scheme: 
\begin{eqnarray}
\label{policy:0-multi}
\begin{aligned}
&\max_{\boldsymbol{U}_{t}}~ && \sum_{i=1}^N\left(\theta_i+\omega_{i,t+1}\right)p_iQ_{i,t}^2U_{i,t}\\
&\mathrm{s.t.}&&\sum_{i=1}^N U_{i,t} \leq K.
\end{aligned}
\end{eqnarray}

\subsection{UoI Upper Bound}
\begin{theorem}
\label{thm:bound-multi}
Let $\theta_i = \frac{\bar\omega_i(1-p_i\pi_i)}{p_i\pi_i}$, where $\boldsymbol{\pi} = \left(\pi_1,\pi_2,\cdots,\pi_N\right)$ is a feasible stationary randomized scheme that scheduling the $i$-th terminal with probability $\pi_i$ at each time slot. The average UoI under scheme (\ref{policy:0-multi}) is upper-bounded by
\begin{eqnarray}
\label{eqn:bound-multi}
\limsup_{T\to\infty}\frac{1}{NT}\sum_{t=0}^{T-1}\sum_{i=1}^N\omega_{i,t}Q_{i,t}^2\leq\frac{1}{N}\sum_{i=1}^N\frac{\bar\omega_i\sigma_i^2}{p_i\pi_i}. 
\end{eqnarray}
\end{theorem}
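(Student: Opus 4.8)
The plan is to run the same drift-plus-penalty comparison used for Theorem~\ref{thm:single}, but now without a virtual queue, since the per-slot constraint (\ref{eqn:con}) is hard rather than averaged. The starting point is the exact identity (\ref{eqn:driftpluspenalty-multi}): for a fixed realization of $(\boldsymbol{Q}_t,\boldsymbol{\omega}_{t+1})$, the drift-plus-penalty of any feasible scheme equals a constant minus $\sum_{i=1}^N(\theta_i+\omega_{i,t+1})p_iQ_{i,t}^2\,\E{U_{i,t}|\boldsymbol{Q}_t,\boldsymbol{\omega}_{t+1}}$. Since scheme (\ref{policy:0-multi}) selects the feasible $\boldsymbol{U}_t$ that \emph{maximizes} $\sum_{i=1}^N(\theta_i+\omega_{i,t+1})p_iQ_{i,t}^2U_{i,t}$, it minimizes this right-hand side pointwise. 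Hence the drift-plus-penalty under (\ref{policy:0-multi}) is dominated by that of any other feasible scheme, and in particular by that of the given stationary randomized benchmark $\boldsymbol{\pi}$, for which $\E{U_{i,t}|\boldsymbol{Q}_t,\boldsymbol{\omega}_{t+1}}=\pi_i$.

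Substituting $\E{U_{i,t}}=\pi_i$ into (\ref{eqn:driftpluspenalty-multi}) and grouping the terms proportional to $Q_{i,t}^2$ leaves, for each $i$, the coefficient $\omega_{i,t+1}(1-p_i\pi_i)-\theta_ip_i\pi_i$. Taking full expectation and using $\E{\omega_{i,t+1}}=\bar\omega_i$ together with the independence of $\omega_{i,t+1}$ from $Q_{i,t}$, this coefficient becomes $\bar\omega_i(1-p_i\pi_i)-\theta_ip_i\pi_i$. The prescribed choice $\theta_i=\frac{\bar\omega_i(1-p_i\pi_i)}{p_i\pi_i}$ makes it identically zero, so every state-dependent term cancels and only the constant $\sum_{i=1}^N(\theta_i+\bar\omega_i)\sigma_i^2$ survives; a one-line simplification gives $\theta_i+\bar\omega_i=\frac{\bar\omega_i}{p_i\pi_i}$, hence the per-slot bound $\E{L_{t+1}-L_t+f_t}\leq\sum_{i=1}^N\frac{\bar\omega_i\sigma_i^2}{p_i\pi_i}$.

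Finally I would sum this inequality over $t=0,\dots,T-1$. The drift terms telescope to $\E{L_T}-\E{L_0}$, and because $L_t=\sum_{i=1}^N\theta_iQ_{i,t}^2\geq0$ with $\theta_i>0$ and $\E{L_0}<\infty$, discarding the nonnegative $\E{L_T}$ and dividing by $NT$ yields $\frac{1}{NT}\sum_{t=0}^{T-1}\E{f_t}\leq\frac{1}{N}\sum_{i=1}^N\frac{\bar\omega_i\sigma_i^2}{p_i\pi_i}+\frac{\E{L_0}}{NT}$. Since $f_t=\sum_{i=1}^N\omega_{i,t+1}Q_{i,t+1}^2$ is merely a one-slot shift of the instantaneous UoI, taking $\limsup_{T\to\infty}$ annihilates the $\frac{\E{L_0}}{NT}$ term and delivers exactly (\ref{eqn:bound-multi}).

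The step demanding the most care is the comparison in the first paragraph: one must justify that the pointwise optimality of (\ref{policy:0-multi}) on the conditional objective is preserved after expectation, which rests on (i) the feasibility of the randomized benchmark under the hard constraint, requiring $\sum_i\pi_i\leq K$ so that the marginals $\E{U_{i,t}}=\pi_i$ are attainable, and (ii) the independence assumptions that let the weight expectations factor through. Everything after the cancellation of the $Q_{i,t}^2$ coefficient is routine telescoping.
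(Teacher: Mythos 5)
Your proposal is correct and follows essentially the same drift-plus-penalty argument as the paper's proof in Appendix F: compare scheme (\ref{policy:0-multi}) against the stationary randomized benchmark $\boldsymbol{\pi}$ in the identity (\ref{eqn:driftpluspenalty-multi}), choose $\theta_i$ to cancel the $Q_{i,t}^2$ coefficient, then telescope and take the limit. No substantive differences.
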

\begin{proof}
See Appendix \ref{sec:the2}. 
\end{proof}

When parameter $\theta_i = \frac{\bar\omega_i(1-p_i\pi_i)}{p_i\pi_i}$, scheme (\ref{policy:0-multi}) is equivalent to
\begin{eqnarray}
\label{policy:main-multi-2}
\begin{aligned}
\max_{\boldsymbol{U}_t} &\quad \sum_{i=1}^N\left(\frac{1}{p_i\pi_i} - 1 + \frac{\omega_{i,t+1}}{\bar\omega_i}\right)p_i\bar\omega_iQ_{i,t}^2U_{i,t}\\
\mathrm{s.t.}&\quad\sum_{i=1}^N U_{i,t} \leq K. 
\end{aligned}
\end{eqnarray}
Theorem \ref{thm:bound-multi} gives the UoI upper-bound under scheme (\ref{policy:main-multi-2}). In addition, according to Eq. (\ref{eqn:drift-bound-multi}), parameter $\theta_i = \frac{\bar\omega_i(1-p_i\pi_i)}{p_i\pi_i}$ leads to a scheduling scheme with the least upper-bound given $\boldsymbol{\pi}$. 

\begin{figure}[htbp]
\centering\vspace*{-.15in}
\includegraphics[width=2.5in]{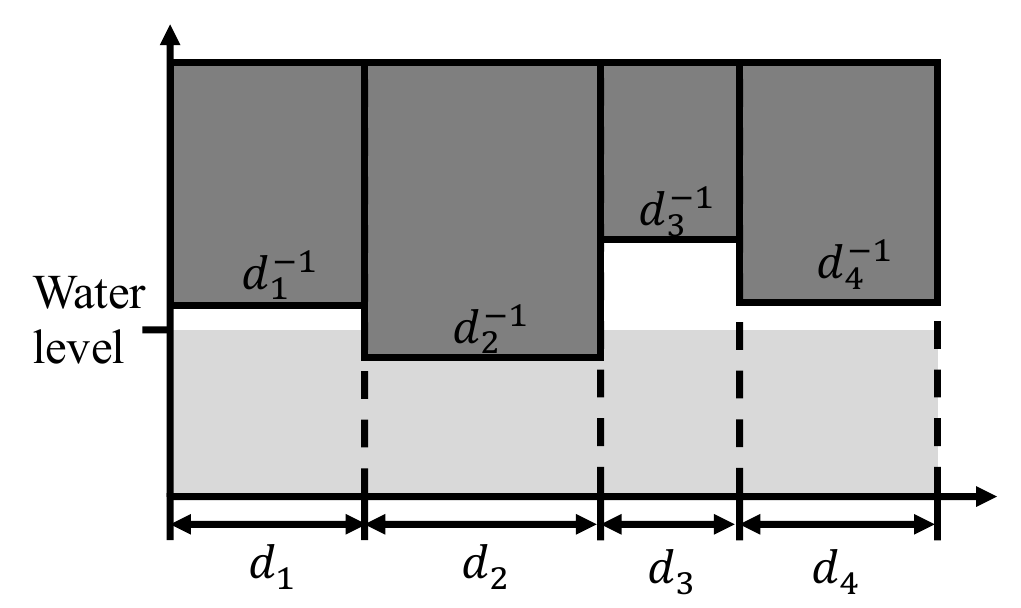}\vspace*{-.15in}
\caption{Solving problem (\ref{pro:pi}) with a water-filling algorithm. }\vspace*{-.15in}
\label{fig:wf}
\end{figure}

\subsection{Minimization of the Upper-Bound}
\label{sec:pi}
To minimize the UoI upper-bound at the right-hand side of Eq. (\ref{eqn:bound-multi}), the randomized scheme $\boldsymbol{\pi}$ needs to be optimized. The problem is described as
\begin{subequations}
\label{pro:pi}
\begin{align}
\min_{\boldsymbol{\pi}} &\quad \sum_{i=1}^N\frac{\bar\omega_i\sigma_i^2}{p_i\pi_i}\label{eqn:obj-pi}\\
\mathrm{s.t.}&\quad\sum_{i=1}^N \pi_i \leq K, \label{con:piK}\\
&\quad\pi_i \in [0,1], \forall i \in \{1,2,\cdots,N\}. \label{con:pi}
\end{align}
\end{subequations}
Since the objective function (\ref{eqn:obj-pi}) in program (\ref{pro:pi}) is convex, and the feasible region defined by Eqs. (\ref{con:piK})--(\ref{con:pi}) is a convex set, program (\ref{pro:pi}) is a convex problem. By convex optimization \cite{convex}, the problem can be numerically solved. In addition, letting $d_i=\sqrt{\frac{\bar\omega_i\sigma_i^2}{p_i}}$, there are three more straightforward methods to obtain the solution to problem (\ref{pro:pi}): 
\begin{enumerate}
\item If $\frac{\max_id_i}{\sum_{j=1}^N d_j}\leq\frac{1}{K}$, by Cauchy-Schwarz Inequality, the solution is 
\begin{eqnarray*}
\pi_i = \frac{d_i}{\sum_{j=1}^N d_j}K, \forall i \in \{1,2,\cdots,N\}. 
\end{eqnarray*}
\item In a more general case, the Karush-Kuhn-Tucher (KKT) conditions are
\begin{eqnarray*}
\left\{
\begin{aligned}
&\lambda_0\left(\sum_{i=1}^N \pi_i - K\right) = 0,\\
&\lambda_i\left(\pi_i-1\right) = 0, \forall i \in \{1,2,\cdots,N\}\\
&\lambda_0 + \lambda_i - \frac{d_i^2}{\pi_i^2} = 0, \forall i \in \{1,2,\cdots,N\}. 
\end{aligned}
\right.
\end{eqnarray*}
\item The solution to program (\ref{pro:pi}) is equivalent to a water-filling problem in Fig. \ref{fig:wf}, where the total volume of the water is $K$, the width of each bar is $d_i=\sqrt{\frac{\bar{\omega}_i\sigma_i^2}{p_i}}$, and the height of each bar is $d_i^{-1}$, such that the maximum volume of water in each bar is $1$. The volume of water in each bar gives the probability $\pi_i$ in the randomized stationary scheme $\boldsymbol{\pi}$.  
\end{enumerate}

\subsection{Solution to (\ref{policy:main-multi-2})}


Since the objective function of scheme (\ref{policy:main-multi-2}) is a linear combination of scheduling decisions $U_{i,t}$, the solution has a rather simple structure described as follows.  
\begin{definition}
In a multi-terminal status update system, the update index $J_{i,t}$ of the $i$-th terminal at the $t$-th time slot is defined as
$$J_{i,t} = \left(\bar{\omega}_i\left(\frac{1}{p_i\pi_i}-1\right)+\omega_{i,t+1}\right)p_iQ_{i,t}^2.$$
\end{definition}
\begin{proposition}
The solution to scheme (\ref{policy:main-multi-2}) is to schedule the $K$ terminals with the largest update indices $J_{i,t}$. 
\end{proposition}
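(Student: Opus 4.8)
The plan is to recognize that scheme (\ref{policy:main-multi-2}) is nothing more than a cardinality-constrained linear selection over binary variables, and to solve it by a standard exchange (greedy) argument. First I would rewrite the objective by factoring the coefficient of each $U_{i,t}$: expanding $\left(\frac{1}{p_i\pi_i}-1+\frac{\omega_{i,t+1}}{\bar\omega_i}\right)p_i\bar\omega_i Q_{i,t}^2$ gives precisely $\left(\bar\omega_i\left(\frac{1}{p_i\pi_i}-1\right)+\omega_{i,t+1}\right)p_iQ_{i,t}^2 = J_{i,t}$, so that (\ref{policy:main-multi-2}) is equivalent to maximizing $\sum_{i=1}^N J_{i,t}U_{i,t}$ subject to $\sum_{i=1}^N U_{i,t}\leq K$ with each $U_{i,t}\in\{0,1\}$.

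Next I would establish that every coefficient is non-negative. Since $p_i$ and $\pi_i$ both lie in $[0,1]$, we have $p_i\pi_i\leq1$, hence $\frac{1}{p_i\pi_i}\geq1$ and $\frac{1}{p_i\pi_i}-1\geq0$; combined with $\bar\omega_i\geq0$, $\omega_{i,t+1}\geq0$, $p_i\geq0$ and $Q_{i,t}^2\geq0$, this yields $J_{i,t}\geq0$ for all $i$ and $t$. Consequently the objective is non-decreasing in each $U_{i,t}$, so activating an additional terminal can never hurt; an optimal schedule therefore fills exactly $\min(K,N)$ slots, since leaving a slot idle only forgoes a non-negative gain.

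The core of the argument is the exchange step. Relabel the terminals so that $J_{(1),t}\geq J_{(2),t}\geq\cdots\geq J_{(N),t}$, and let $\boldsymbol{U}_t^\star$ denote the schedule that activates the first $K$ of these. Given any feasible $\boldsymbol{U}_t$ that does not coincide with $\boldsymbol{U}_t^\star$, there must exist an index $j$ among the top $K$ with $U_{j,t}=0$ and an index $k$ outside the top $K$ with $U_{k,t}=1$; setting $U_{k,t}\leftarrow0$ and $U_{j,t}\leftarrow1$ preserves feasibility and changes the objective by $J_{j,t}-J_{k,t}\geq0$. Iterating this swap finitely many times transforms $\boldsymbol{U}_t$ into $\boldsymbol{U}_t^\star$ without ever decreasing the objective, which proves that scheduling the $K$ terminals with the largest update indices is optimal.

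I do not expect any genuine obstacle, since the feasible set couples the variables only through the single cardinality constraint and the objective is separable with non-negative weights. The only points requiring a word of care are the treatment of ties---when several terminals share the same index value, any top-$K$ selection is equally optimal---and the degenerate cases $N\leq K$ or $J_{i,t}=0$, in which activating the corresponding terminals is optional rather than strictly beneficial; both are harmless and follow immediately from the non-negativity established above.
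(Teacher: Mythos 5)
Your proposal is correct and matches the paper's reasoning: the paper offers no formal proof of Proposition~3, merely observing that the objective of scheme (\ref{policy:main-multi-2}) is a linear combination of the binary decisions $U_{i,t}$ under a single cardinality constraint, which is exactly what your exchange argument formalizes. Your added care about non-negativity of $J_{i,t}$ (so that the $\leq K$ constraint can be treated as filling all $K$ slots) is a worthwhile detail the paper leaves implicit.
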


By Proposition 3, a higher context-aware weight $\omega_{i,t+1}$ or cost $Q_{i,t}^2$ brings a larger chance of transmission, since its update index is larger. 

\subsection{CSMA-Based Distributed Scheduling Scheme}

\begin{figure}[htbp]
\centering\vspace*{-.15in}
\includegraphics[width=3.7in]{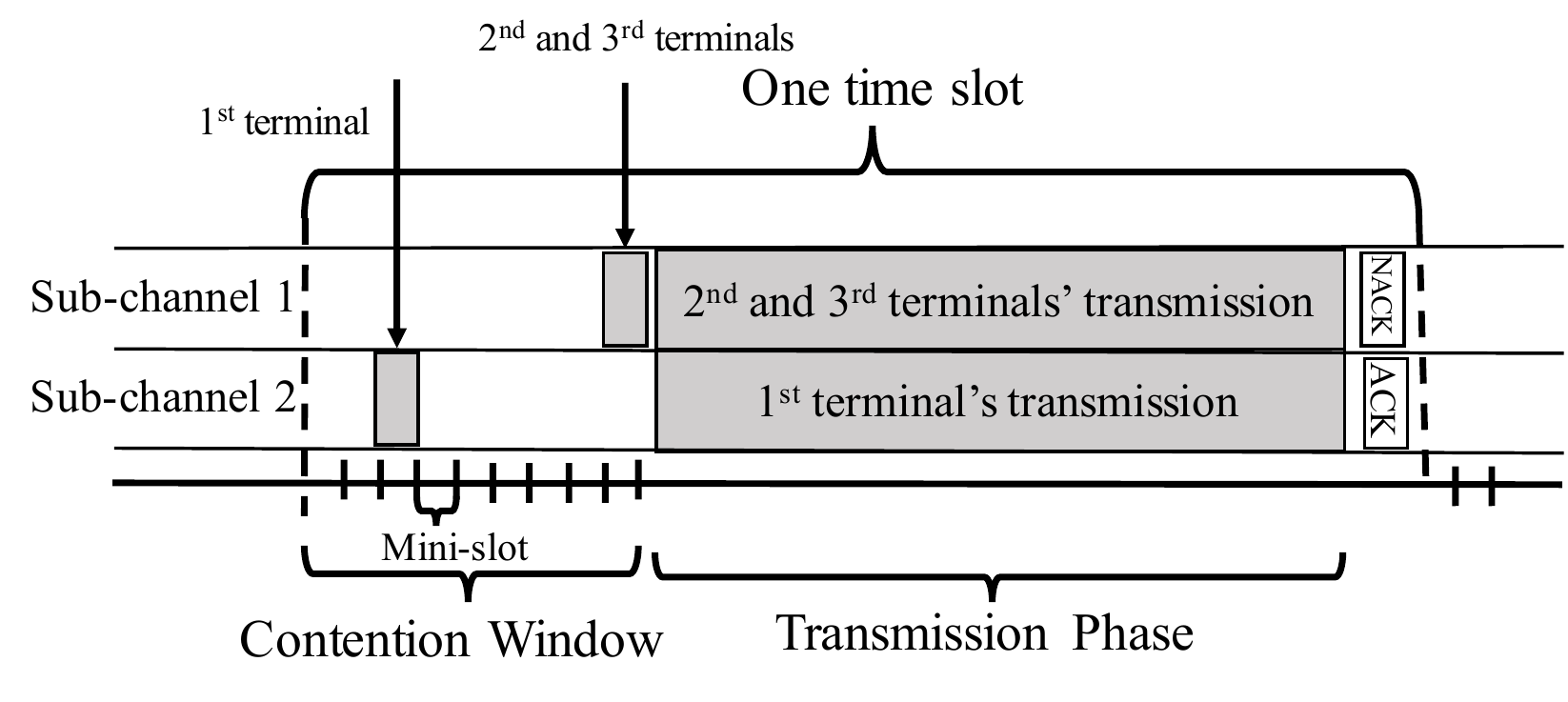}\vspace*{-.15in}
\caption{Distributed implementation of context-aware adaptive scheduling scheme for status updates. }\vspace*{-.15in}
\label{fig:distributed}
\end{figure}

Although scheme (\ref{policy:main-multi-2}) is simple-structured, it requires context and status information of all terminals to make a centralized scheduling decision, which is impractical for uplink status updates in multi-access networks. Therefore, a distributed scheduling scheme that does not require global real-time information is essential.

\begin{algorithm}[htbp]
\caption{Distributed Scheduling. }
\label{alg:distributed}
\begin{algorithmic}[1]
\Require $p_i$, $\pi_i$, $\bar\omega_i$, $\Delta J$;
\For{each time slot $t$}
\State calculate $J_{i,t} \leftarrow \left(\bar{\omega}_i\left(\frac{1}{p_i\pi_i}-1\right)+\omega_{i,t+1}\right)p_iQ_{i,t}^2$;
\If{$J_{i,t} > J_{\mathrm{thrshld},t}$}
\State generate uniformly distributed random backoff time $l_i\in[0, W-1]$;
\State $\tau\leftarrow 1$, $j\leftarrow 1$;
\While{$\tau \leq l_i$ and $j \leq K$}
\State listen to the $j$-th sub-channel;
\If{there is an intention message sent over the $j$-th sub-channel}
\State $j\leftarrow j+1$;
\EndIf
\State $\tau \leftarrow \tau+1$;
\EndWhile
\If{there are idle sub-channels at the $(l_i+1)$-th mini-slot, i.e., $\tau=l_i+1$ and $j\leq K$}
\State sends an intention message at the $(l_i+1)$-th mini-slot in the $j$-th sub-channel;
\State $j\leftarrow j+1$;
\While{$\tau\leq W$ and $j\leq K$}
\State listen to the $j$-th sub-channel;
\If{there is an intention message sent over the $j$-th sub-channel}
\State $j\leftarrow j+1$;
\EndIf
\State $\tau \leftarrow \tau+1$;
\EndWhile
\State the transmission phase begins, and sends status information with the corresponding sub-channel;
\State receive ACK/NACK from the fusion center;
\EndIf
\EndIf
\If{there are idle sub-channels in the time slot}
\State $J_{\mathrm{thrshld},t+1} = J_{\mathrm{thrshld},t} - \Delta J$;
\ElsIf{the contention window length is smaller than $\frac{K}{K+1}(W+1)$ mini-slots}
\State $J_{\mathrm{thrshld},t+1} = J_{\mathrm{thrshld},t} + \Delta J$;
\EndIf
\EndFor
\end{algorithmic}
\end{algorithm}

In this section, we propose to adaptively schedule the status updates in a decentralized manner with a CSMA/CA-based access scheme. It is observed that the goal of scheme (\ref{policy:main-multi-2}) is to schedule the $K$ terminals with the largest value of update indicator $J_{i,t}$, while the computation of $J_{i,t}$ does not require real-time information from other terminals, i.e., it is naturally decoupled. Hence, we propose a threshold-based scheme (Algorithm \ref{alg:distributed}), in which each terminal compares its update indices $J_{i,t}$ to a dynamic threshold $J_{\mathrm{th}, t}$. A terminal is turn to \emph{active} mode and competes for channel access if $J_{i,t}>J_{\mathrm{th},t}$. To resolve collision, each time slot consists of a transmission phase and a contention window that has a maximum length of $W$ mini-slots. The $i$-th terminal generates a random variable $l_i$ that is uniformly distributed in the range of $[0, W-1]$ if it is active, and senses the sub-channels one-by-one for the first $l_i$ mini-slot in the contention window. If there are idle sub-channels after the $l_i$ mini-slots expires, the terminal selects the next idle sub-channel and sends an message to reserve the sub-channel at the $(l_i+1)$-th mini-slot for the transmission phase. If all the sub-channels are occupied or the window size limit is reached, the contention window is closed. After the contention window, status information is transmitted at the terminals' reserved sub-channels. At the end of each time slot, the fusion center sends acknowledge (ACK/NACK) back to each terminal. An illustration of the distributed scheme in a system with two sub-channels and three active terminals are shown in Fig. \ref{fig:distributed}. The 1st terminal's back-off time is $l_1=2$ mini-slots, while the backoff time of both the 2nd and the 3rd terminal is 8 mini-slots. At the 3rd mini-slots, the 1st terminal sends a intention message in the 1st sub-channel, so that the 2nd and the 3rd terminals know that the 1st sub-channel is occupied and start to listen to the 2nd sub-channel. However, since the 2nd and the 3rd terminals have exactly the same back-off time, their transmissions collide in the 2nd sub-channel, and will receive NACK from the fusion center.

In \cite{DBLP:journals/corr/abs-1803-08189}, a threshold-based scheme for distributed AoI minimization is proposed. The threshold in \cite{DBLP:journals/corr/abs-1803-08189} is static and is optimized beforehand by bi-section search, which also works for the distributed scheduling scheme in this section. However, a fixed threshold does not only need extra computation for optimization via simulation, but also lacks the ability to adapt to varying context $\omega_{i,t}$. Instead, we propose a dynamic threshold that is updated locally at the end of each time slot. Ideally, only the $K$ terminals with the largest update indices are active at each time slot. Therefore, the threshold can be dynamically adjusted according to the proposed Algorithm 1 based on the following intuitions:
\begin{enumerate}
\item If there are idle sub-channels in the transmission phase, it is more likely that the number of active terminals is too small, so the threshold should be decreased (see steps 21-22);
\item If all the sub-channels are occupied and the contention window is short, it is more likely that there are too many active terminals, so the threshold should be increased (see steps 23-24). Here the contention window is deemed too short if it is smaller than the expected contention window length, which is given by Lemma 5. 
\end{enumerate}
\begin{lemma}
If there are $K$ active terminals and no idle sub-channel, the expected length of contention window is $\E{W_K}=\frac{K}{K+1}(W+1)$.
\end{lemma}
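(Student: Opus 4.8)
The plan is to reduce the lemma to the expected maximum of the mini-slots in which the $K$ active terminals transmit their intention messages. First I would identify the contention window length $W_K$ with the latest such mini-slot. Since an active terminal sends its intention at the $(l_i+1)$-th mini-slot and the contention phase closes as soon as all $K$ sub-channels are reserved (the inner loop of Algorithm \ref{alg:distributed} halts once $j>K$), the window occupies exactly $\max_{1\le i\le K}(l_i+1)$ mini-slots. Writing $s_i=l_i+1$, the sending instants satisfy $s_i\in\{1,2,\dots,W\}$ and $W_K=\max_i s_i$.

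Next I would determine the joint law of $s_1,\dots,s_K$ under the stated conditioning. The event ``$K$ active terminals and no idle sub-channel'' is equivalent to the backoffs $l_1,\dots,l_K$ being pairwise distinct: if two of them coincide, the corresponding terminals contend for the same sub-channel in the same mini-slot and collide, which leaves at least one of the $K$ sub-channels unreserved, i.e., idle; conversely, distinct backoffs make the $K$ terminals reserve the $K$ sub-channels one-by-one, leaving none idle. Because the $l_i$ are i.i.d.\ uniform on $\{0,1,\dots,W-1\}$, conditioning on distinctness makes every ordered tuple of distinct values equally likely, so $\{s_1,\dots,s_K\}$ is a uniformly random $K$-subset of $\{1,2,\dots,W\}$.

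It then remains to evaluate $\E{W_K}=\E{\max_i s_i}$ for such a subset. I would use $\Prob{\max_i s_i = m}=\binom{m-1}{K-1}\big/\binom{W}{K}$ for $m\in\{K,\dots,W\}$, apply the absorption identity $m\binom{m-1}{K-1}=K\binom{m}{K}$, and sum via the hockey-stick identity $\sum_{m=K}^{W}\binom{m}{K}=\binom{W+1}{K+1}$ to obtain
\begin{equation*}
\E{W_K}=\frac{K}{\binom{W}{K}}\binom{W+1}{K+1}=\frac{K(W+1)}{K+1}.
\end{equation*}

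I expect the modeling step, rather than the algebra, to be the main obstacle: one has to argue cleanly that ``no idle sub-channel'' forces the backoffs to be distinct (so that no mini-slots are wasted on collisions and the window truly ends at $\max_i s_i$), and that conditioning i.i.d.\ uniform draws on distinctness yields exactly the uniform distribution over $K$-subsets. Once this reduction is in place, the expected subset maximum is a standard combinatorial computation.
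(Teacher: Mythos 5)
Your proposal is correct and follows essentially the same route as the paper: identify $W_K$ with $\max_i l_i+1$, condition on the backoffs being pairwise distinct so the maximum is that of a uniform $K$-subset of $\{1,\dots,W\}$, and evaluate it via the hockey-stick identity (the paper uses the CDF $\Prob{W_K\leq t}=\binom{t}{K}/\binom{W}{K}$ with a tail-sum rather than your PMF-plus-absorption computation, but this is the same calculation). Your explicit justification that ``no idle sub-channel'' forces distinct backoffs is a point the paper merely asserts, so that modeling step is if anything handled more carefully in your version.
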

\begin{proof}
See Appendix D.
\end{proof}

\begin{figure}[htbp]
\centering\vspace*{-.15in}
\includegraphics[width=4.7in]{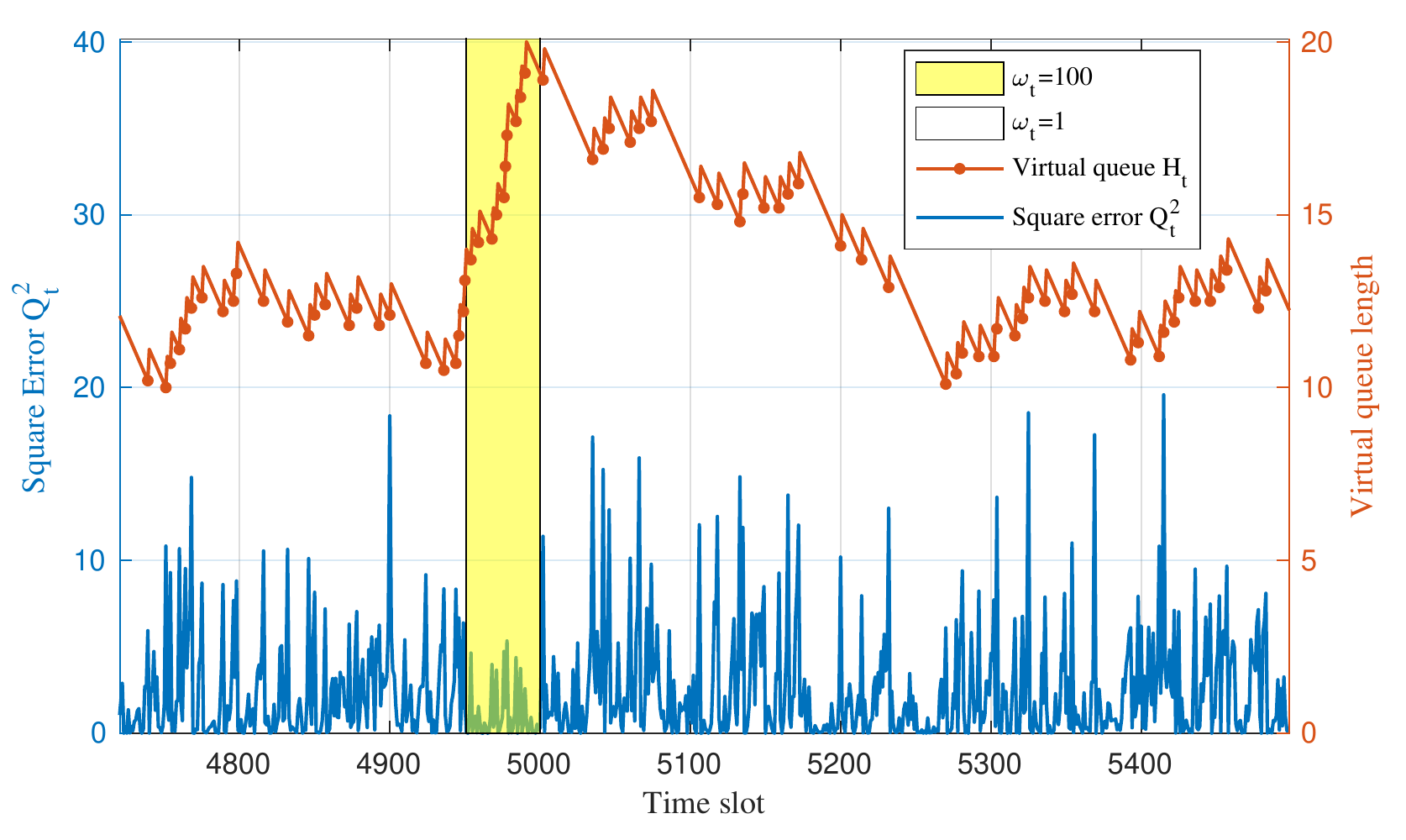}\vspace*{-.15in}
\caption{Virtual queue length and squared estimation error in a single terminal updating problem. }\vspace*{-.15in}
\label{fig:ss}
\end{figure}

\section{Numerical Results}
In the simulations, the fusion center receives updates from terminals whose source status evolves as a standard Wiener process, and the transmission of a status update packet takes $1$ms. The increment of estimation error is normalized, so that the increment in $1$ms is an i.i.d. standard Gaussian random variable. Apart from the distributed scheduling scheme, overheads are neglected  such that the duration of each time slot is $1$ms. We evaluate the performance of proposed scheduling schemes under several typical scenarios. 

\subsection{Single Terminal}
\subsubsection{Adaptive Status Updates}
Fig. \ref{fig:ss} plots a sample evolution of the virtual queue length and squared estimation error. In the single terminal system, the terminal is allowed to transmit in one fourth of the time slots, i.e., $\rho = 0.25$, with a transmission success probability of 1 for simplicity. The context-aware weight is set to $1$ in the former $4950$ time slots (with white background in the figure) of every $5000$ time slots, and set to $100$ in the latter $50$ time slots (highlighted in yellow). Parameter $V$ is set to 1. The difference in context-aware weight indicates that the latter $50$ time slots are critical in which the timeliness of status information is faced with a highly strict standard, while the others are in an ordinary period. The solid red dots indicate a status delivery. As shown in the figure, the virtual queue length significantly increases over the critical period, meaning that the terminal is updating more frequently. Accordingly, the squared estimation error is much lower over the critical period. After the critical period, the length of virtual queue drops because the terminal transmits less frequently.

\subsubsection{Tradeoff Between Updating Frequency and UoI}
In Fig. \ref{fig:tradeoff}, the average UoI under different status updating frequency constraints and parameter $V$ is plotted. Transmission success probability is set to $0.8$. The context-aware weight at each time slot is i.i.d. with probability $0.01$ being $100$ and probability $0.99$ being $1$. As the figure shows, since a larger $V$ gives a higher priority on guaranteeing the average updating frequency constraint, it has worse status update timeliness. Especially when $V$ is larger than $512$ and $\rho$ is larger than $0.5$, the curve is hardly smooth because the actual updating frequency is much lower than the frequency bound $\rho$. In addition, as the available channel resources for status updates, i.e., $\rho$, increases, the average UoI is reduced, which implies the tradeoff between resource usage and status update timeliness.

\begin{figure}[htbp]
\centering\vspace*{-.15in}
\includegraphics[width=3in]{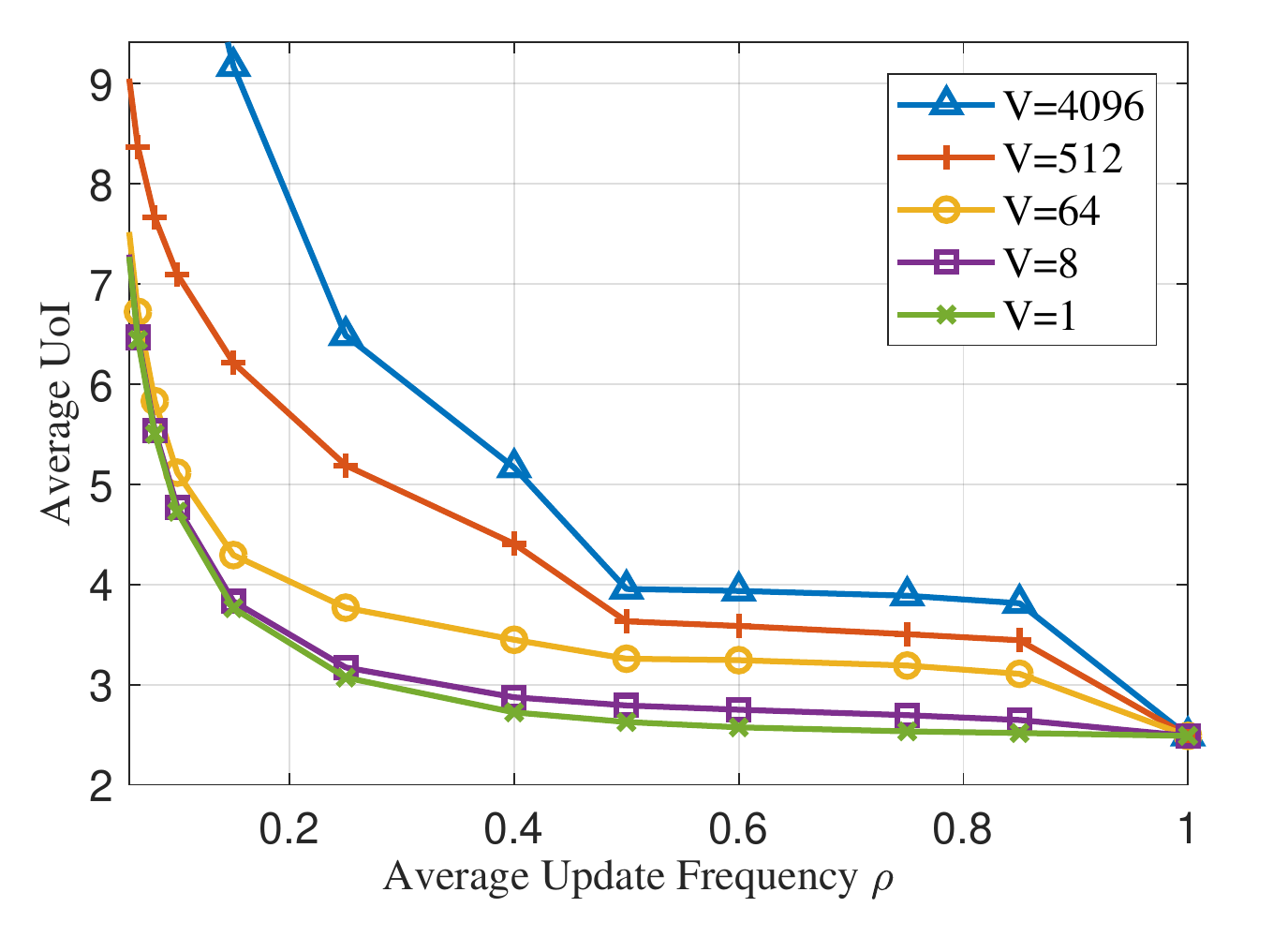}\vspace*{-.15in}
\caption{Tradeoff between resource usage and status updates performance. }\vspace*{-.15in}
\label{fig:tradeoff}
\end{figure}

\subsubsection{Performance Comparison}
Since the increment of estimation error is i.i.d. Gaussian random variable, and the context-aware weight $\omega_t$ are i.i.d., the system is Markovian with state at each time slot being $(Q_t, \omega_t, \omega_{t+1})$. Therefore, by relative value iteration \cite{bertsecas}, the Pareto optimal tradeoff between the average update frequency $\rho$ and average UoI can be obtained. Similarly, the AoI-optimal update scheme under each update frequency constraint is also obtained. Fig. \ref{fig:single-perf} illustrates the average UoI under the AoI-optimal scheme, the UoI-optimal scheme, and the proposed adaptive scheme. As the figure shows, although the adaptive scheme has a simple structure and low computation complexity compared to the optimal scheme, it can achieve a near-optimal UoI, which substantially outperforms the AoI-optimal scheme. 

\begin{figure}[htbp]
\centering\vspace*{-.15in}
\includegraphics[width=3in]{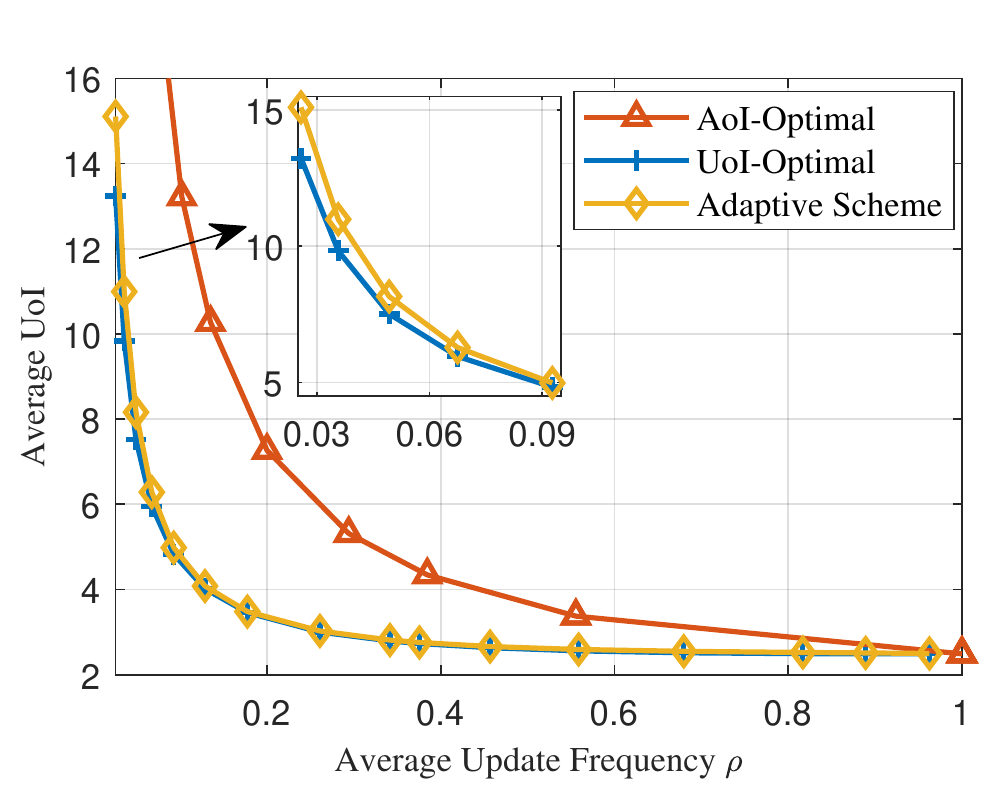}\vspace*{-.15in}
\caption{UoI under the proposed adaptive scheme, UoI-optimal scheme, and AoI-optimal scheme. }\vspace*{-.15in}
\label{fig:single-perf}
\end{figure}

\subsection{Multi-Terminal Scheduling}
In the system with multiple status update terminals, we compare the performance of several scheduling schemes in terms of average UoI, error-bound violation probability, and their performance in a remote control system, specifically \emph{CartPole}\cite{CartPole}, in order to evaluate the effectiveness of the proposed metric and the corresponding scheduling scheme. There are four scheduling schemes compared in the simulation, i.e., 
\begin{itemize}
\item Round robin: Terminals are scheduled to update one by one in a fixed order. 
\item AoI based: Schedule the top $K$ terminals with the largest value of $p_i\Delta_{i,t}(\Delta_{i,t}+1)$, where $\Delta_{i,t}$ denotes the AoI of the $i$-th terminal at the $t$-th time slot. Ref. \cite{Jiang} shows that this scheme is asymptotically AoI-optimal when channel failure probability goes to zero and the number of terminals is large. 
\item Centralized scheduling scheme: As described in scheme (\ref{policy:main-multi-2}), it requires global information at scheduling. 
\item Distributed scheduling scheme: As described in Algorithm \ref{alg:distributed}, it is implemented locally at each terminal. Note that as the window size limit $W$ increases, the probability of contention drops, while the length of a time slot (i.e., the time interval of scheduling) becomes longer. We will compare the performance of the distributed scheme under different window sizes in the first set of simulations. The threshold increment is set to $\Delta J = \frac{1}{N}\sum_{i=1}^{N}\bar\omega_i\sigma_i^2$, which is the expected growth of average UoI if none of the terminals successfully deliver its status. 
\end{itemize}

In the simulations, there are $K=2$ sub-channels in the status update system. The transmission success probability of the $i$-th terminal is $p_i=0.7 + \frac{0.3(i-1)}{N-1}$, so that the average success probability remains the same as the amount of terminals varies. The context aware weight is i.i.d. with probability 0.05 being 100 and probability 0.95 being 1. 

\subsubsection{Distributed Scheduling Scheme}
\begin{figure}[htbp]
\centering\vspace*{-.15in}
\includegraphics[width=3.2in]{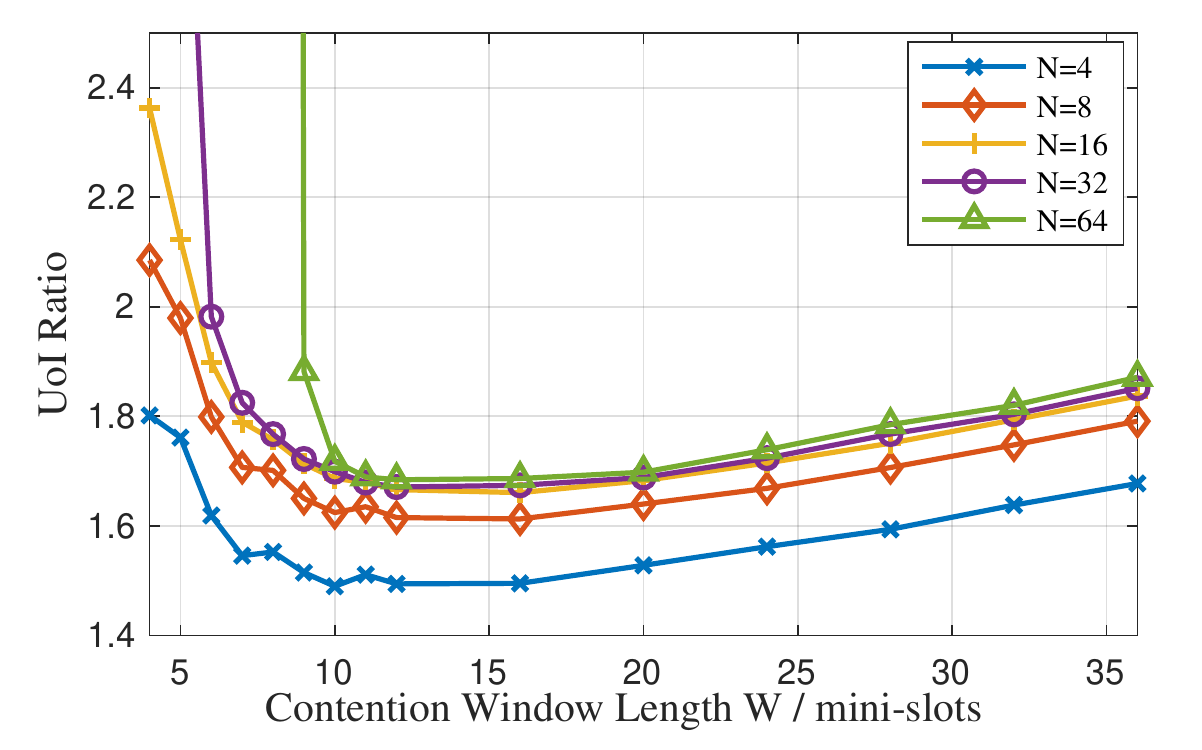}\vspace*{-.15in}
\caption{The ratio of UoI under distributed scheduling to UoI under centralized scheduling. }\vspace*{-.15in}
\label{fig:distributed-result}
\end{figure}

Under the distributed scheduling scheme, a time slot is slightly longer due to the existence of contention window, which consists of $W$ mini-slots. In the simulation, the length of a mini-slot is set to $10\mu$s (For comparison, in IEEE 802.11g, the length of a mini-slot is $9\mu$s), so that the length of a time slot in the distributed scheme is $\left(1+\frac{W}{100}\right)$ms. Therefore, the variance of estimation error's increment is $\left(1+\frac{W}{100}\right)$. In Fig. \ref{fig:distributed-result}, the ratio of UoI under decentralized scheduling to its centralized counterpart is illustrated. When the contention window size is small, the collision probability is high, so the distributed scheme performs much worse. As the window size increases, the length of a time slot grows as well, which reduces the frequency of scheduling. Overall, the window size $W=16$ gives a relatively good performance for a system with $K=2$ sub-channels.

\begin{figure}[htbp]
\centering
\vspace*{-.15in}
\includegraphics[width=3in]{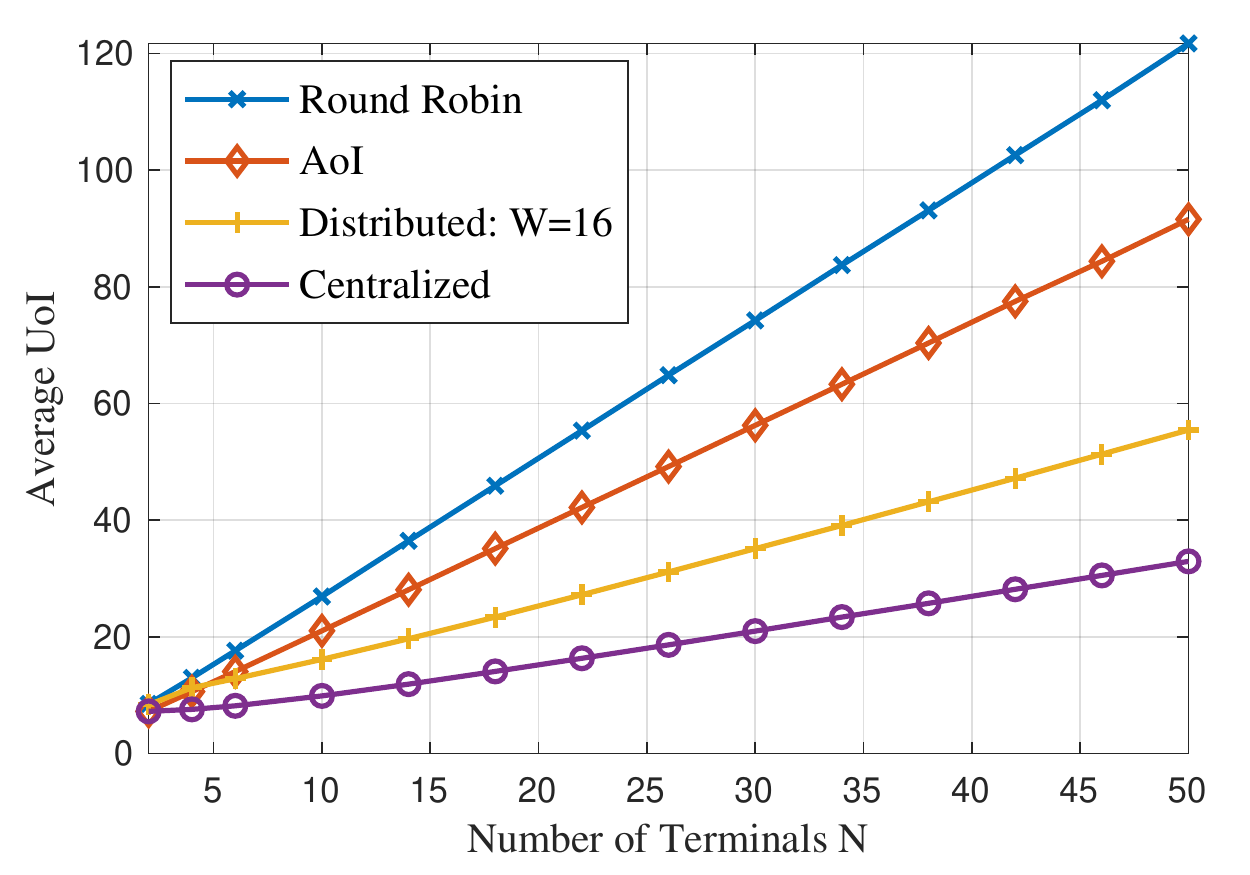}
\vspace*{-.15in}
\caption{Average UoI under several scheduling schemes. }
\vspace*{-.15in}
\label{fig:SAoI}
\end{figure}

\subsubsection{UoI Comparison}
The average UoI under different schemes is illustrated in Fig. \ref{fig:SAoI}. The round robin scheme has the worst performance. With both context and status information, the centralized scheduling scheme produces the lowest average UoI. Overall, both the centralized and the distributed context-aware adaptive scheduling schemes outperform the AoI-based scheme. 

\begin{figure}[htbp]
\centering
\vspace*{-.15in}
\includegraphics[width=3.2in]{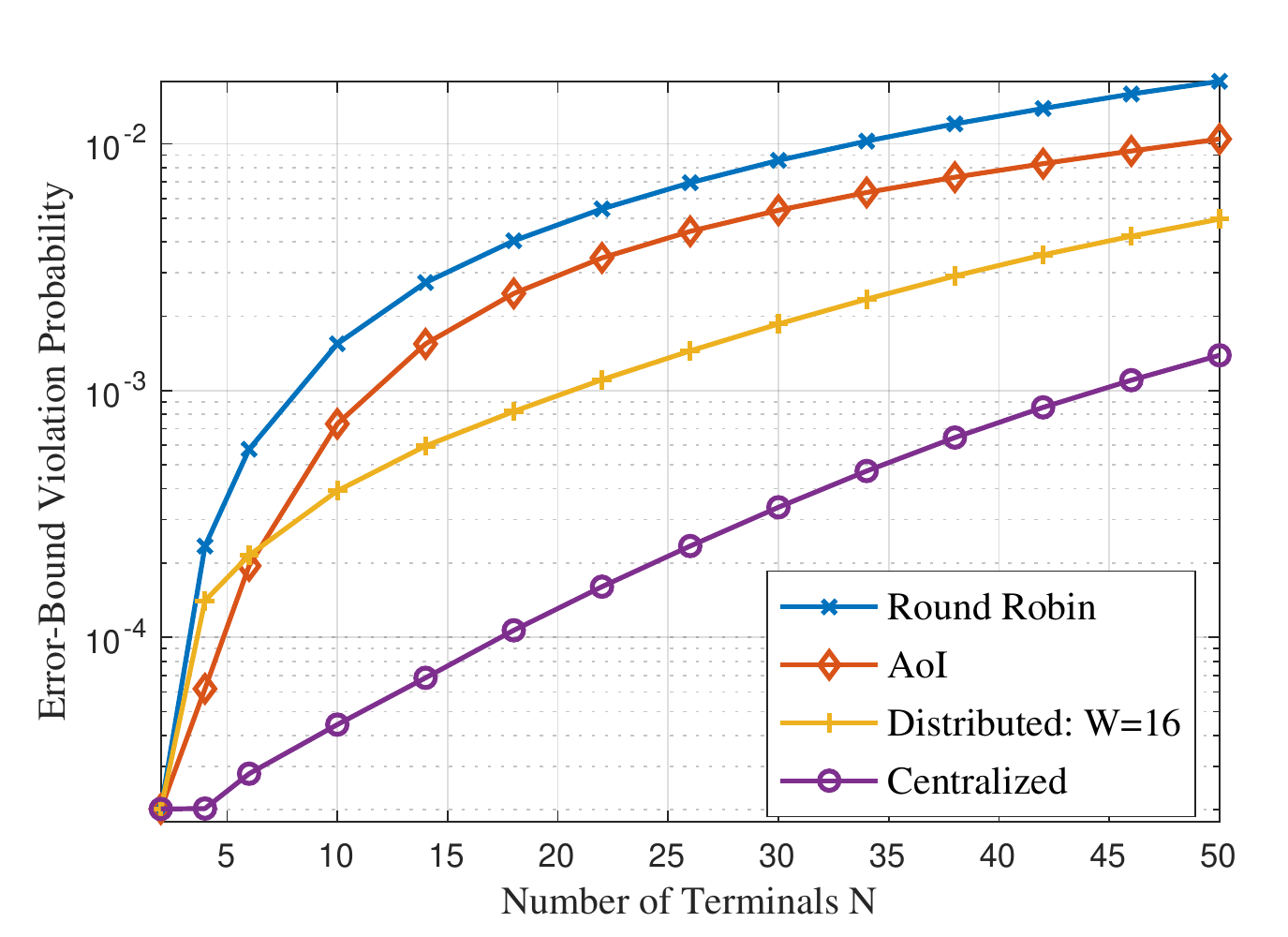}
\vspace*{-.15in}
\caption{Average error-bound violation probability. The threshold for error $Q_{i,t}$ is $15$ when the weight is 1, and 5 when the weight is 100. }
\vspace*{-.15in}
\label{fig:vio}
\end{figure} 

\subsubsection{Error-Bound Violation Probability}
Fig. \ref{fig:vio} plots the average probability of the error $|Q_{i,t}| > 5$ when the context-aware weight is 100 and $|Q_{i,t}| > 15$ when the context-aware weight is 1. It is shown that when the context and status information is known to the scheduler, the centralized scheduling scheme can reduce the error-bound violation probability by around 90\% comparing to the AoI-based scheme. Although the UoI minimization problem is not specifically designed for avoiding error-bound violation, simulation results show that both the centralized and the distributed context-aware adaptive scheduling schemes bring a significant improvement over round robin and AoI-based schemes especially when the number of terminals is large. 

\subsubsection{Remote Control of CartPoles}
We exam the performance of the status update schemes in remotely controlling CartPole systems. In CartPole, a pole is attached to a cart (as shown in Fig. \ref{fig:cartpole}), and a controller forces the cart to its left or right to prevent the pole from falling and the cart from moving out of the screen. The game ends when the angle of the pole is larger than $12^\circ$ or the position of the cart is 2.4 units away from the center, and is played for at most 200 steps in each episode. The controller tries to play as many steps as possible in each episode. In the original CartPole, the 4-dimensional system status (i.e., the position $x$ and velocity $\dot{x}$ of the cart, and the angle $\alpha$ and the angular velocity $\dot{\alpha}$ of the pole) is \emph{fully observable} at each step, such that the controller decides whether to force the cart to its left or its right based on the status. The control problem has been thoroughly studied with many good control algorithms\cite{CartPole}. 

\begin{figure}[htbp]
\vspace*{-.15in}
\centering
\includegraphics[width=2.1in]{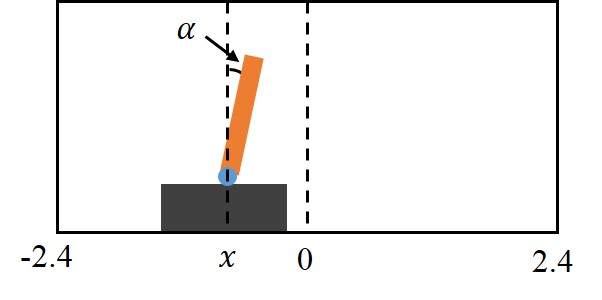}
\vspace*{-.15in}
\caption{An illustration of Cartpole. The controller pushes the cart to the left or right with 10 Newtons of force to keep the pole stand and the cart in the given range. }
\vspace*{-.15in}
\label{fig:cartpole}
\end{figure}

To consider remotely controlling Cartpole systems, an additional random force is applied to the cart at each step, so that the status of the CartPole is not fully predictable to the controller even if the kinetics equations are known. The simulation process is described as follows:
\begin{enumerate}[a)]
\item Train a multi-layer perceptron (MLP) with a 100-node hidden layer for the control problem of a single CartPole whose the status is fully observable. 
\item At each step, schedule the CartPoles to deliver their status to the controller according to a scheduling scheme. For the CartPoles whose status is not delivered, the controller updates their status with previous information and the kinetics equations assuming zero random force. 
\item The controller remotely controls the CartPoles with the control algorithm based on the latest available status. 
\end{enumerate}

\begin{figure}[htbp]
\vspace*{-.15in}
\centering
\includegraphics[width=3.2in]{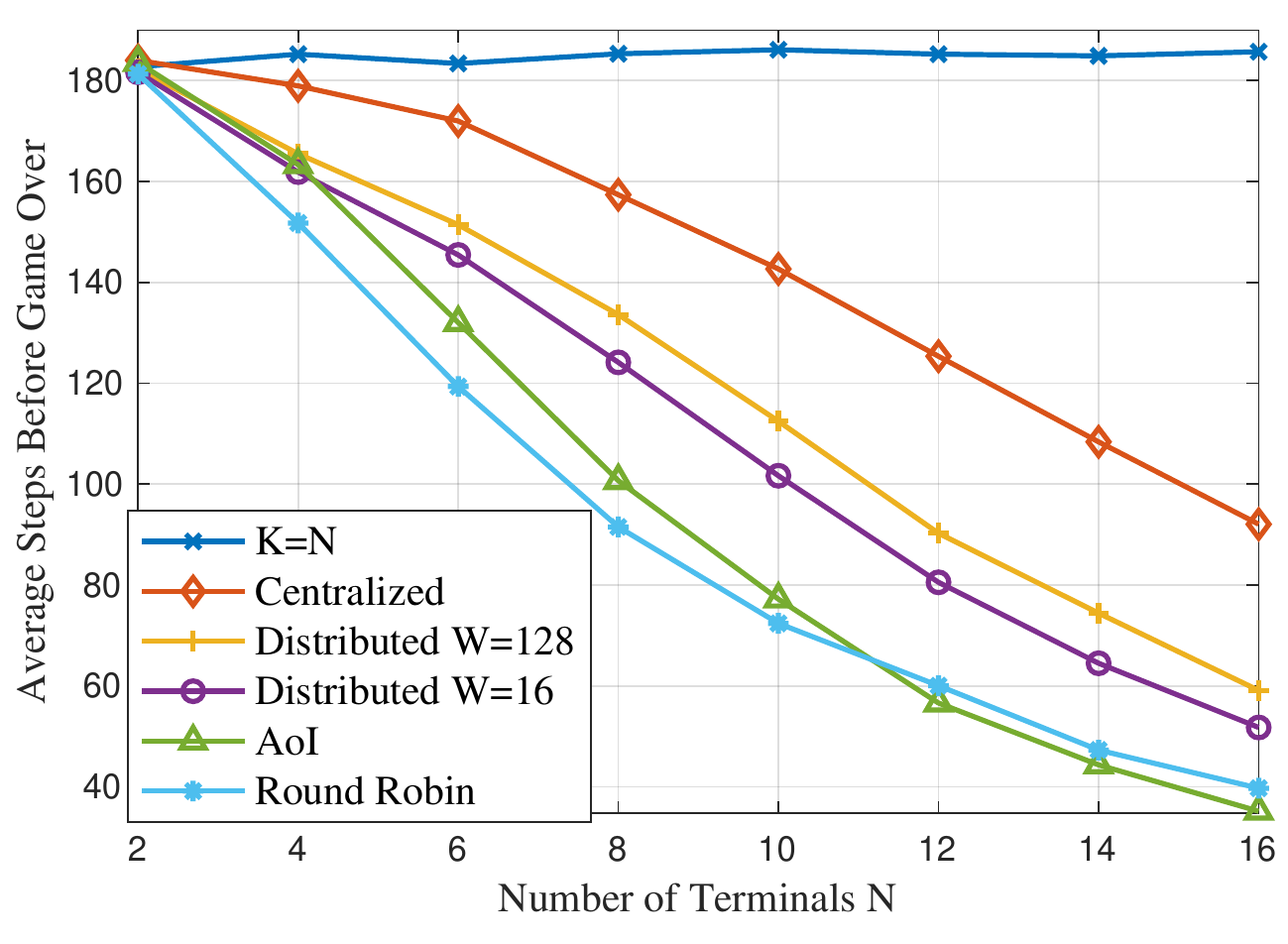}
\vspace*{-.15in}
\caption{Average steps before game over in CartPole. }
\vspace*{-.15in}
\label{fig:ctrl}
\end{figure}

Fig. \ref{fig:ctrl} illustrates the average number of steps being played in each episode of CartPole. The game runs on OpenAI gym\footnote{https://github.com/openai/gym/blob/master/gym/envs/classic\_control/cartpole.py}, which is an open-source platform that provides easy access to many classic control problems. The random force follows Gaussian distribution with zero mean and standard deviation of $10$ Newtons, which equals to the magnitude of the force applied by the controller at each time slot. The context information is exploited under a straightforward intuition: the status information is more urgent if the situation worsens. Therefore, if the distance is going to increase at the next step (i.e., $x\dot{x} > 0$), the context-aware weight for position $x$ and velocity $\dot{x}$ is set to 9; otherwise the context-aware weight is 1. Similarly, the context-aware weight for angle $\alpha$ and angular velocity $\dot{\alpha}$ is 9 if $\alpha\dot{\alpha} > 0$. For each dimension of the status $\left(x,\dot{x},\alpha,\dot{\alpha}\right)$, the UoI is firstly computed. After linearly rescaling the status such that both the value of position and the value of angle are within $[-1,1]$, the UoI of each dimension is summed up as the overall UoI. However, it is hard to implement different lengths of time slots in the infrastructure, so the contention window is neglected in this CartPole simulation. Fig. \ref{fig:ctrl} shows that with UoI-aware scheduling, the CartPoles can be stable for a longer time, meaning that the performance of the CartPole game is substantially improved. 

\section{Conclusions}

This paper proposes UoI as a new metric for the timeliness of status updates in remote control systems. The UoI exploits the context information of the system as well as real-time status evolution. The updating scheme of a single terminal under an average updating frequency constraint is proposed. Then a centralized scheduling scheme is also proposed to allocate limited channel resources to multiple terminals in order to reduce the average UoI. To facilitate decentralized implementation of the proposed scheduling scheme, a dynamic threshold-based random access scheme is designed. Simulation results show that the proposed adaptive updating scheme and scheduling schemes are able to adapt to the context and status information in the system, and achieve a significant improvement over existing AoI-based schemes on the timeliness of status updates, so as to improve the performance of remote control systems with the example of CartPole game. 

\begin{appendices}
\section{Proof of Proposition 1}
The control actions are made to minimize $$\sum_{t=0}^{T-1}\E{\omega_{t}\left(a\hat{x}_{t-1} + bv_t + r_t - y_t\right)^2}.$$
Taking derivative with respect to control action $v_t$, the objective function of problem (\ref{pro:control}) becomes
\begin{eqnarray*}
\frac{\mathrm{d}}{\mathrm{d}v}\sum_{t=0}^{T-1}\E{\omega_{t}\left(a\hat{x}_{t-1} + bv_t + r_t - y_t\right)^2} = 2\E{\omega_{t}}\left(a\hat{x}_{t-1} + bv_t - y_t\right).
\end{eqnarray*}
The last equality holds since $r_t$ has zero mean. Therefore, the minimum weighted squared difference is achieved when 
$v^*_t = \frac{y_t - a\hat{x}_{t-1}}{b}$. 

Therefore, the objective function of problem (\ref{program:LQC}) becomes
\begin{align*}
\lim_{T\to\infty}\frac{1}{T}\sum_{t=0}^{T-1}\E{\omega_{t}\left(x_t - y_t\right)^2}
&= \lim_{T\to\infty}\frac{1}{T}\sum_{t=0}^{T-1}\E{\omega_{t}\left(ax_{t-1} + bv^*_t + r_t  - y_t\right)^2}\\
&=a^2\lim_{T\to\infty}\frac{1}{T}\sum_{t=0}^{T-1}\E{\omega_{t}\left(x_{t-1} - \hat{x}_{t-1}\right)^2} + \lim_{T\to\infty}\frac{1}{T}\sum_{t=0}^{T-1}\E{\omega_{t}}\E{r^2}.
\end{align*}
Thus, the objective is equivalent to minimizing the long-term average weighted squared difference between the actual state and the estimated state. 

\section{Proof for Lemma 1}
By the definition of virtual queue $H_t$ in Eq. (\ref{eqn:vq}),
\begin{eqnarray}
\label{eqn:qstable}
H_T~\geq~H_{T-1} - \rho + U_{T-1}~\geq~H_0 - T\rho + \sum_{\tau=0}^{T-1}U_\tau.
\end{eqnarray}
Dividing both sides by $T$ and taking limit to infinity yields
\begin{eqnarray*}
\lim_{T\to\infty}\frac{\E{H_T} - H_0}{T} &\geq& - \rho + \lim_{T\to\infty}\frac{1}{T}\sum_{\tau=0}^{T-1} \E{U_\tau}.
\end{eqnarray*}
Since the left-hand side equals to zero, the lemma is hereby proved. 

\section{Proof for Lemma 2}
\label{sec:lemma2}
Summing up Eq. (\ref{eqn:optimization-single}) over $t\in\{0,1,2\cdots,T-1\}$, we get
\begin{eqnarray}
\label{proof:3-1}
\E{L_T} - \E{L_0} + \sum_{t=0}^{T-1}\E{f_t}~\leq~CT.
\end{eqnarray}
First, we prove that the virtual queue is mean rate stable. By $\E{f_t}\geq \bar f_\mathrm{min}$, we get
$$\E{L_T}~\leq~(C-\bar f_\mathrm{min})T + \E{L_0}.$$
By the definition of Lynapunov function, we have 
$\frac{1}{2}V\E{H^2_T} \leq (C-\bar f_\mathrm{min})T + \E{L_0}$.
Since $\E{H^2_T}\geq\E{H_T}^2$, we obtain 
$\E{H_T}~\leq~\sqrt{\frac{2}{V}\left((C-\bar f_\mathrm{min})T + \E{L_0}\right)}$. 
Dividing both sides of the inequality by $T$ and letting $T$ approach $\infty$ yields
$\limsup_{T\to\infty}\frac{\E{H_T}}{T}=0$.

Next, we prove the upper bound in (\ref{eqn:bound-single}). Dividing (\ref{proof:3-1}) by $T$ and letting $T\to\infty$ yields
\begin{eqnarray*}
\limsup_{T\to\infty}\frac{1}{T}\E{L_T} + \limsup_{T\to\infty}\frac{1}{T}\sum_{t=0}^{T-1}\E{f_t}~\leq~C.
\end{eqnarray*}
By $L_t\geq0$, the lemma is proved. 

\section{Proof for Lemma 3}
\label{sec:lemma3}
By Eq. (\ref{con:dynamic}) and $U_t,S_t\in\left\{0,1\right\}$, we have 
\begin{eqnarray}
Q_{t+1}^2-Q_{t}^2 & = & A_t^2 + 2\left(1-U_tS_t\right)A_tQ_t - U_tS_tQ_t^2. \label{eqn:drift-q}
\end{eqnarray}
By the definition of virtual queue in Eq. (\ref{eqn:vq}), we get
\begin{eqnarray}
H_{t+1}^2-H_{t}^2 \leq \left(H_t - \rho + U_t\right)^2 - H_t^2 \leq  1 + 2\left( - \rho + U_t\right)H_t. \label{eqn:drift-h}
\end{eqnarray}
Substituting Eq. (\ref{eqn:drift-q}) and Eq. (\ref{eqn:drift-h}) into Eq. (\ref{def:drift}) yields
\begin{eqnarray*}
\Delta_t &\leq& \frac{1}{2}V + V\left( - \rho + \E{U_t|Q_t, \omega_{t+1}, H_t}\right)H_t + \theta \sigma^2  - \theta p\E{U_t|Q_t, \omega_{t+1}, H_t}Q_t^2\\
&~& + 2\theta\E{\left(1-U_tS_t\right)|Q_t, \omega_{t+1}, H_t}\E{A_t|Q_t, \omega_{t+1}, H_t}Q_t. 
\end{eqnarray*}
Since $\E{A_t}=0$ and by the assumption that $A_t$ is independent to $Q_t, \omega_{t+1}, H_t$, we have
\begin{eqnarray}
\label{ineqn:proof1}
\Delta_t \leq \theta\sigma^2 + \frac{1}{2}V - V\rho H_t + (VH_t - \theta pQ_t^2)\E{U_t|Q_t, \omega_{t+1}, H_t}. 
\end{eqnarray}
By adding the expected penalty $\E{f_t|Q_t, \omega_{t+1}, H_t}$ to both sides of Eq. (\ref{ineqn:proof1}), the lemma is proved. 

\section{Proof for Lemma 4}
\label{sec:lemma4}
Given the dynamic function in Eq. (\ref{eqn:dynamic}) and $U_{i,t}\in\left\{0,1\right\}$, $S_{i,t}\in\left\{0,1\right\}$, we have
\begin{eqnarray*}
Q_{i,t+1}^2-Q_{i,t}^2 & = & A_
{i,t}^2 + 2\left(1-U_{i,t}S_{i,t}\right)A_{i,t}Q_{i,t} - U_{i,t}S_{i,t}Q_{i,t}^2. 
\end{eqnarray*}
Substitute it into Eq. (\ref{def:drift-multi}). Using $A_{i,t}$ is independent to $\boldsymbol{Q}_t, \boldsymbol{\omega}_{t+1}$ and has zero mean yields
\begin{eqnarray*}
\Delta_t
=\sum_{i=1}^N\theta_i\sigma_i^2 - \sum_{i=1}^N\theta_ip_iQ_{i,t}^2\E{U_{i,t}|\boldsymbol{Q}_t, \boldsymbol{\omega}_{t+1}}.
\end{eqnarray*}
The expectation is exactly
$$\E{f_t|\boldsymbol{Q}_t, \boldsymbol{\omega}_{t+1}} = \sum_{i=1}^N\omega_{i,t+1}\sigma_i^2 + \sum_{i=1}^N\omega_{i,t+1}Q_{i,t}^2\left(1-p_i\E{U_{i,t}|\boldsymbol{Q}_t, \boldsymbol{\omega}_{t+1}}\right).$$
Summing up the above two equations, the lemma is hereby proved. 

\section{Proof for Theorem 2}
\label{sec:the2}
Letting $\boldsymbol{\pi} = \left(\pi_1,\pi_2,\cdots,\pi_N\right)$ be a stationary scheme of program (\ref{program:main}), with which the $i$-th terminal is scheduled with probability $\pi_i$ at each time slot. Since scheme (\ref{policy:0-multi}) minimizes the right-hand side of Eq. (\ref{eqn:driftpluspenalty-multi}), substituting scheme $\boldsymbol{\pi}$ into the right-hand side of Eq. (\ref{eqn:driftpluspenalty-multi}), we get
\begin{eqnarray*}
\E{L_{t+1} - L_t + f_t|\boldsymbol{Q}_t, \boldsymbol{\omega}_{t+1}}
&\leq&  \sum_{i=1}^N(\theta_i+\omega_{i,t+1})\sigma_i^2 - \sum_{i=1}^N\theta_ip_i\pi_iQ_{i,t}^2 + \sum_{i=1}^N\omega_{i,t+1}\left(1-p_i\pi_i\right)Q_{i,t}^2. 
\end{eqnarray*}
Since context-aware weight $\boldsymbol{\omega}_{t+1}$ is independent to estimation error $\boldsymbol{Q}_t$, taking expectation yields
\begin{eqnarray}
\label{eqn:drift-bound-multi}
\E{L_{t+1} - L_t + f_t|\boldsymbol{Q}_t}
&\leq&  \sum_{i=1}^N(\theta_i+\bar\omega_i)\sigma_i^2 + \sum_{i=1}^N\left(\bar{\omega}_i\left(1-p_i\pi_i\right)-\theta_ip_i\pi_i\right)Q_{i,t}^2.
\end{eqnarray}
If $\theta_i \geq \frac{\bar{\omega}_i(1-p_i\pi_i)}{p_i\pi_i}$, the last term at the right-hand side of Eq. (\ref{eqn:drift-bound-multi}) is no larger than zero. To minimize the first term at the right-hand side of Eq. (\ref{eqn:drift-bound-multi}), letting $\theta_i = \frac{\bar{\omega}_i(1-p_i\pi_i)}{p_i\pi_i}$, we get
\begin{eqnarray}
\label{eqn:lya-multi}
\E{L_{t+1} - L_t + f_t|\boldsymbol{Q}_t} \leq  \sum_{i=1}^N\frac{\bar{\omega}_i\sigma_i^2}{p_i\pi_i}, 
\end{eqnarray}
in which the right-hand side is a constant. Taking expectation at both sides of Eq. (\ref{eqn:lya-multi}), we have
\begin{eqnarray*}
\E{L_{t+1} - L_t + f_t} \leq  \sum_{i=1}^N\frac{\bar{\omega}_i\sigma_i^2}{p_i\pi_i}.
\end{eqnarray*}
Summing up the above inequality over $t\in\{0,1,\cdots,T-1\}$, and dividing both sides by $T$. Taking limit as $T\to\infty$, since $L_0<\infty$ and $L_t\geq0$, we obtain
\begin{eqnarray*}
\limsup_{T\to\infty}\frac{1}{T}\sum_{t=0}^{T-1}\E{f_t}\leq\sum_{i=1}^N\frac{\bar{\omega}_i\sigma_i^2}{p_i\pi_i}. 
\end{eqnarray*}
Since $f_t$ is the UoI at the $(t+1)$-th time slot, the theorem is proved hereby. 

\section{Proof for Lemma 5}
Without loss of generality, let the first to the $K$-th terminal be active. Thus, we have 
$$W_K = \max_{i\in\{1,2,\cdots,K\}}{l_i} + 1.$$
Since each of the $K$ terminals occupies a sub-channel, the probability distribution of $W_K$ is 
\begin{eqnarray*}
\Prob{W_K\leq t} & = &\Prob{l_i<t, \forall i\in\{1,2,\cdots,K\}|l_1\neq l_2\neq \cdots \neq l_K}, 
\end{eqnarray*}
which leads to
\begin{eqnarray*}
\Prob{W_K\leq t} =\left\{
\begin{aligned}
&\frac{\binom{t}{K}}{\binom{W}{K}}, &t \geq K;\\
&0, &t < K. 
\end{aligned}
\right.
\end{eqnarray*}
So its expectation is obtained by
\begin{eqnarray*}
\E{W_K} = W - \sum_{t=K}^{W-1}\Prob{W_K \leq t} =  W - \frac{\sum_{t=K}^{W-1}\binom{t}{K}}{\binom{W}{K}}. 
\end{eqnarray*}
Substituting $\sum_{t=K}^{W-1}\binom{t}{K} = \binom{W}{K+1}$ into the above equation yields
$\E{W_K} = \frac{K}{K+1}(W+1)$. 

\end{appendices}

\bibliographystyle{ieeetr}

\end{document}